\documentclass[11pt,reqno]{article}

\usepackage{amsmath, amssymb, amscd, amsthm, amsfonts}

\usepackage{graphicx}
\usepackage[perpage,symbol*]{footmisc}
\usepackage{float}
\usepackage{hyperref}
\usepackage{pgfplots}

\usepackage{natbib}
\usepackage{authblk}

\usepackage{algorithm,algcompatible,amsmath}
\algnewcommand\INPUT{\item[\textbf{Input:}]}%
\algnewcommand\OUTPUT{\item[\textbf{Output:}]}%

\usepackage{enumitem}

\usepackage[english]{babel}
\bibliographystyle{abbrvnat}
\setcitestyle{authoryear,open={(},close={)}}

\usepackage{colortbl,dcolumn}

\usepackage{listings} 
\usepackage{xcolor} 

\renewcommand{\raggedright}{\leftskip=0pt \rightskip=0pt plus 0cm}
\raggedright

\def\hilite<#1>{%
	\temporal<#1>{\color{blue!25}}{\color{magenta}}%
	{\color{blue!55}}}

\newcolumntype{H}{>{\columncolor{blue!20}}c!{\vrule}}
\newcolumntype{H}{>{\columncolor{blue!20}}c}

\newtheorem{theorem}{Theorem}

\newtheorem{definition}{Definition}

\newtheorem{lemma}{Lemma}

\numberwithin{equation}{section} 
\numberwithin{theorem}{section}
\numberwithin{lemma}{section} 
\numberwithin{corollary}{section}
\numberwithin{definition}{section}
\numberwithin{proposition}{section} 
\numberwithin{remark}{section}
\numberwithin{example}{section}
\DeclareMathOperator\supp{supp}

\usepackage{fancyhdr}

\setlength{\oddsidemargin}{-0.5cm}

\renewcommand{\oddsidemargin}{0mm}

\usepackage{setspace}
\doublespacing

\def\R{\mathbb R}

\definecolor{mydarkgreen}{rgb}{0,0.4,0}

\makeatletter
\def\@makefnmark{}
\def\qed{\hfill$\square$\smallskip}
\makeatother
\usepackage{amsfonts,amssymb}
\usepackage{mathrsfs}

\begin{document}

\pagestyle{fancy}
\fancyhf{}
\rhead{\textbf{Meng} and \textbf{Eloyan}}
\lhead{\textit{Journal of the Royal Statistical Society: Series B (Statistical Methodology)}}
\fancyfoot[CE,CO]{\thepage}

\vspace{-15in}

	\title{\Large {\textbf{Principal Manifold Estimation via Model Complexity Selection}}}
	\author[1,*]{Kun Meng}
	\author[1]{Ani Eloyan}
		
	\affil[1]{\small Department of Biostatistics, Brown University School of Public Health, Providence, RI 02903, USA}
	\affil[*]{Corresponding Author: e-mail: \texttt{kun\_meng@brown.edu}, Address: 121 S. Main St, Providence, RI 02903, USA.}
	
	\maketitle

\begin{abstract}
	\noindent We propose a framework of principal manifolds to model high-dimensional data. This framework is based on Sobolev spaces and designed to model data of any intrinsic dimension. It includes principal component analysis and principal curve algorithm as special cases. We propose a novel method for model complexity selection to avoid overfitting, eliminate the effects of outliers, and improve the computation speed. Additionally, we propose a method for identifying the interiors of circle-like curves and cylinder/ball-like surfaces. The proposed approach is compared to existing methods by simulations and applied to estimate tumor surfaces and interiors in a lung cancer study. 

\end{abstract}

\noindent%
{\it Keywords:} total squared curvature, lung cancer, splines, tumor interior.\footnote{
\begin{itemize}
    \item The project was supported by Grant Number 5P20GM103645 from the National Institute of General Medical Sciences.
    \item This paper is published in the \textit{Journal of the Royal Statistical Society: Series B (Statistical Methodology)}, available on \url{https://doi.org/10.1111/rssb.12416}.
    \item \textbf{Correspondence:} Kun Meng, 121 S. Main St, Providence, RI, 02903, USA; Email: \texttt{kun\_meng@brown.edu}
    \item \textbf{Abbreviations:} CT, computed tomography; HS, Hastie and Stuetzle (1989); iid, identically and independently distributed; KDE, kernel density estimate; KLD, Kullback-Leibler divergence; MSD, mean squared distance; PME, principal manifold estimate; PS, principal surface estimate by Yue et al. (2016); PCA, principal component analysis; PDF, probability density function; sd, standard deviation.
\end{itemize}

}


\section{Introduction}\label{introduction}

\textit{Manifold learning} is a method for modeling high-dimensional data, assuming that data are from a low-dimensional manifold and corrupted by high-dimensional noise. The dimension of the low-dimensional manifold is called the \textit{intrinsic dimension} of data. There are two primary components of manifold learning: (i) \textit{parameterization} - uncovering a low-dimensional description of high-dimensional data; (ii) \textit{embedding} - finding a map relating the low-dimensional description and high-dimensional data. The two components are entangled with each other. Based on a given parameterization, embedding becomes a statistical fitting problem. In turn, projecting data to the image of an embedding map results in a parameterization (e.g., \cite{yue2016parameterization}). In this paper, we propose a framework and estimation approach combining these two components. Specifically, our proposed approach constructs an embedding map from a ``partial" parameterization and obtains a full parameterization from this embedding map. We define principal manifolds as minima of a functional equipped with a regularity penalty term derived as a semi-norm on a Sobolev space.  A Sobolev embedding theorem implies the differentiability of our proposed manifolds. The novel framework of principal manifolds allows the intrinsic dimension of data to be any positive integer. The linear principal component analysis (PCA, \cite{jolliffe1986principal}) and principal curve algorithm (\cite{hastie1989principal}) are special cases of this framework. We provide topological and functional analysis arguments giving mathematical foundations of our proposed principal manifold framework. To avoid overfitting and preserve the curvatures of underlying manifolds, we propose a model complexity selection method. Additionally, this method drastically reduces the computational cost and eliminates the effects of outliers. Based on this method and the theory of reproducing kernel Hilbert spaces, we propose an algorithm to estimate principal manifolds efficiently. Additionally, motivated by a problem in radiation therapy for lung cancer patients, we propose a method for identifying interiors of circle-like curves and cylinder/ball-like surfaces. 

Throughout this paper, we use the following notations: (i) $d$ and $D$, with $d<D$, denote the dimensions of intrinsic manifolds and the spaces into which these manifolds are embedded, respectively. (ii) $\Vert x\Vert_{\R^q}=(\sum_{k=1}^q x_k^2)^{1/2}$ for all $x\in\R^q$ and all positive integers $q$. (iii) Let $q_1,q_2\in\{d,D\}$, $k\in \{1,2,\cdots,\infty\}$, and $I$ be a subset of $\R^{q_1}$, $C^k(I\rightarrow\R^{q_2})$ denotes the collection of $I\rightarrow\R^{q_2}$ maps whose components have up to $k^{th}$ continuous classical derivatives. For simplicity, $C^k\left(I\right)=C^k (I\rightarrow\R^1)$ and $C=C^0$. (iv)  $\delta_x$ is the point mass at $x$ (see Section 6.9 of \cite{rudin1991functional}). (v) $L^p$ and $\Vert\cdot\Vert_{L^p}$, $p\in[1,\infty]$, denote \textit{Lebesgue spaces} and their norms (see Chapter 2 of \cite{adams2003sobolev}).

A considerable amount of work has been done for parameterization and embedding tasks. ISOMAP (\cite{tenenbaum2000global}), locally linear embedding (\cite{roweis2000nonlinear}), and Laplacian eigenmaps (\cite{belkin2003laplacian}) constructed parameterizations of high-dimensional data. \cite{hastie1989principal} (hereafter HS) proposed a \textit{principal curve} framework and algorithm for the embedding task. HS defined principal curves as follows.
\begin{definition}\label{Def 1}
(Part I) Let $I\subset\R^1$ be a closed and possibly infinite interval. Suppose a map $f: I\rightarrow\R^D$ satisfies the conditions (referred to as \textit{HS conditions} throughout this paper): (i) $f\in C^\infty(I\rightarrow\R^D)$; (ii) $\left\Vert f'(t)\right\Vert_{\R^D}=1$ for all $t\in I$, i.e., $f$ is arc-length parameterized; (iii) $f$ does not self intersect, i.e. $t_1\ne t_2$ implies $f(t_1)\ne f(t_2)$; (iv) $\int_{\left\{t: f(t)\in B\right\}} dt<\infty$ for any finite ball $B$ in $\R^D$. Then $\pi_f: \R^D\rightarrow I$ is defined as follows and called the projection index with respect to $f$.
\begin{equation}\label{eq:projection}
\pi_f(x)=\sup\left\{t\in I:\left\Vert x-f(t)\right\Vert_{\R^D}=\inf_{t'\in I}\left\Vert x-f(t')\right\Vert_{\R^D}\right\}, \ \ \mbox{for all $x\in\R^D$.}
\end{equation}

\noindent (Part II) Suppose $X$ is a continuous random $D$-vector with finite second moments. Principal curves of $X$ are all maps $f:I\rightarrow\R^D$ satisfying HS conditions and the self-consistency defined as 
\begin{align}\label{self consistency}
\mathbb{E}\left(X\vert\pi_f(X)=t\right)=f(t).
\end{align}
\end{definition}
\noindent The projection index $\pi_f$ is well-defined under HS conditions. However, HS conditions are restrictive due to the following reasons: 1) condition (ii) requires principal curves to be arc-length parameterized, while the arc-length parameterization is not generalizable to higher dimensions; 2) condition (iii) rules out many curves in applications, e.g., a handwritten ``8'' in handwriting recognition; 3) condition (iv) is not straightforward to verify. Furthermore, the HS principal curve framework has a model bias (see Section 6 of \cite{hastie1989principal}). 

To remove  the model bias in the HS principal curve framework, \cite{tibshirani1992principal} proposed a new
principal curve framework based on a mixture model and self-consistency (\ref{self consistency}). HS
showed that curves satisfying (\ref{self consistency}) are critical points of the \textit{mean squared distance} (MSD) functional $\mathcal{D}_X(f)=\mathbb{E}\left\Vert X-f\left(\pi_f(X)\right)\right\Vert^2_{\R^D}$. However, \cite{duchamp1996extremal} showed that these critical points may be \textit{saddle} points, i.e., there may exist adjacent curves with smaller MSD than that of curves satisfying (\ref{self consistency}). This \textit{saddle issue}  was a flaw of the frameworks based on (\ref{self consistency}). \cite{gerber2013regularization} explained the saddle issue from the ``orthogonal/along" variation trade-off viewpoint and discussed the challenges stemming from this issue in model complexity selection. To remove the saddle issue, \cite{gerber2013regularization} avoided using MSD $\mathcal{D}_X(f)$ and proposed a new functional $\mathcal{Q}_X(\pi)=\mathbb{E}\left\{\left[\mathbb{E}(X\vert\pi(X))-X\right]^T \frac{d}{dt}\big\vert_{t=\pi(X)}\mathbb{E}(X\vert\pi(X)=t)\right\}$ modeling the parameterization maps $\pi:\R^D\rightarrow I$. This functional penalizes the non-orthogonality between fitting error $\mathbb{E}(X\vert\pi(X))-X$ and curve tangent $\frac{d}{dt}\big\vert_{t=\pi(X)}\mathbb{E}(X\vert\pi(X)=t)$. Principal curves, which satisfy (\ref{self consistency}), correspond to the minima of $\mathcal{Q}
_X(\pi)$. However, the use of $\mathcal{D}_X(f)$ for measuring the discrepancy between data $X$ and fitted $f(\pi_f(X))$ is of interest due to  the interpretability of $\mathcal{D}_X(f)$. Another approach to removing the saddle issue, while using $\mathcal{D}_X(f)$, is to avoid self-consistency and define principal curves by minimizing MSD with a length constraint or a regularity penalty. \cite{kegl2000learning} defined principal curves as the minima $\arg\min_f\{\mathcal{D}_X(f): f\in BV([a,b]), V_a^b(f)\le L \}$, where $L>0$ is pre-defined and $BV([a,b])$ is the collection of functions $f$ on $[a,b]$ with finite total variation $V_a^b(f)<\infty$. However, functions in $BV([a,b])$ are not necessarily everywhere differentiable. Indeed, the algorithm proposed by \cite{kegl2000learning} fits data by polygonal lines, which are only piecewise differentiable. In many applications, we expect globally differentiable curves. Additionally, the \cite{kegl2000learning} framework is only defined for curves, i.e., the intrinsic dimension $d=1$. \cite{smola2001regularized} proposed the \textit{regularized principal manifold} framework, where regularized estimates of principal manifolds are minimizers of the following form.
\begin{align}\label{Smola et al}
\arg\min_{f\in\mathscr{F}}\left\{\mathbb{E}\left\Vert X-f\left(\pi_f(X)\right)\right\Vert_{\mathbb{R}^D}^2+\lambda \Vert Pf\Vert_{\mathscr{H}}^2\right\},
\end{align}
where $\mathscr{F}$ is a collection of functions and $P$ is an operator mapping $f$ into an inner product space $\mathscr{H}$. \cite{smola2001regularized} (Example 7) showed that the \cite{kegl2000learning} definition of principal curves is essentially the special case of (\ref{Smola et al}), where $P=\frac{d}{dt}$ and $\mathscr{H}=L^2$, i.e., the penalty term in (\ref{Smola et al}) is $\Vert f'\Vert_{L^2}^2=\int_a^b\Vert f'\Vert_{\mathbb{R}}^2 dt$ (the derivative $f'$ is defined only almost everywhere with respect to the Lebesgue measure, rather than exactly everywhere). However, the regularized principal manifold approach defined by \cite{smola2001regularized} has several limitations. The problem of selection of the tuning parameter $\lambda$, and more generally, model complexity to avoid overfitting and preserve intrinsic curvatures remains unresolved, as well as a definition of the projecting operator $P$ that would correspond to the tuning parameter selection. HS shows that $\pi_f$ is well-defined under the restrictive HS conditions and when the intrinsic dimension $d=1$, however, there is no discussion of assumptions on the collection $\mathscr{F}$ by \cite{smola2001regularized} such that the resulting $\pi_f$ is well-defined. The function spaces $\mathscr{H}$ and $\mathscr{F}$ compatible with $P$ and $\pi_f$ are not defined. Our proposed principal manifold estimation approach addresses these limitations. 

The paper is organized as follows. In Section \ref{geometry}, we propose a condition for defining the projection indices $\pi_f$ associated with maps $f: \R^d\rightarrow\R^D$, where $d$ is allowed to be any positive integer. This proposed condition is less restrictive and easier to verify than HS conditions. In Section \ref{definitions of principal manifolds}, based on this condition and function spaces of Sobolev type, we define principal manifolds by minimizing MSD equipped with a total squared curvature penalty. This definition solves the differentiability problem in \cite{kegl2000learning}. In Section \ref{Joints of a random vector}, we present the outline of our proposed \textit{principal manifold estimation} algorithm by briefly describing its main steps. Details of these steps are provided in Sections \ref{section: The Reduction Step of PME} and \ref{explicit formula}. Motivated by \cite{eloyan2011smooth}, we propose a data reduction method in Section \ref{section: The Reduction Step of PME}. We then present the PME algorithm in Section \ref{explicit formula}. A simulation study comparing the performance of the PME algorithm to some existing manifold learning methods is presented in Section \ref{simulations}. In Section \ref{Interior classifier}, we propose a method for identifying the interiors of circle-like curves and cylinder/ball-like surfaces. The performance of the proposed method for estimating lung cancer tumor surfaces and interiors using computed tomography (CT) data is presented in Section \ref{An Application}. 


\section{Manifolds and Projection Indices}\label{geometry}

Before defining principal manifolds, we introduce concepts of manifolds and projection indices. 
\begin{definition}\label{Def 2}
	Let $f\in C(\R^d\rightarrow\R^D)$, then the image of $f$, i.e., $M_f^d=\{f(t):t\in\mathbb{R}^d\}$, is called a $d$-dimensional manifold determined by $f$, where $f$ is called an embedding map and $\R^d$ is called the parameter space. Furthermore, $f$ is called a homeomorphism if its inverse $f^{-1}:M^d_f\rightarrow\mathbb{R}^d$ exists and is continuous. Here, the continuity of $f^{-1}$ is associated with the subspace topology of $M_f^d$, i.e., the topology $\{U\bigcap M_f^d:U\mbox{ is an open subset of }\R^D\}$.
\end{definition}
\noindent In applications, $\R^d$ is the space containing latent parameterizations $\{t_i\}_{i=1}^I$ of observed data $\{x_i\}_{i=1}^I\subset\R^D$. 
\begin{figure}[ht]
	\begin{center}
		\includegraphics[scale=0.28]{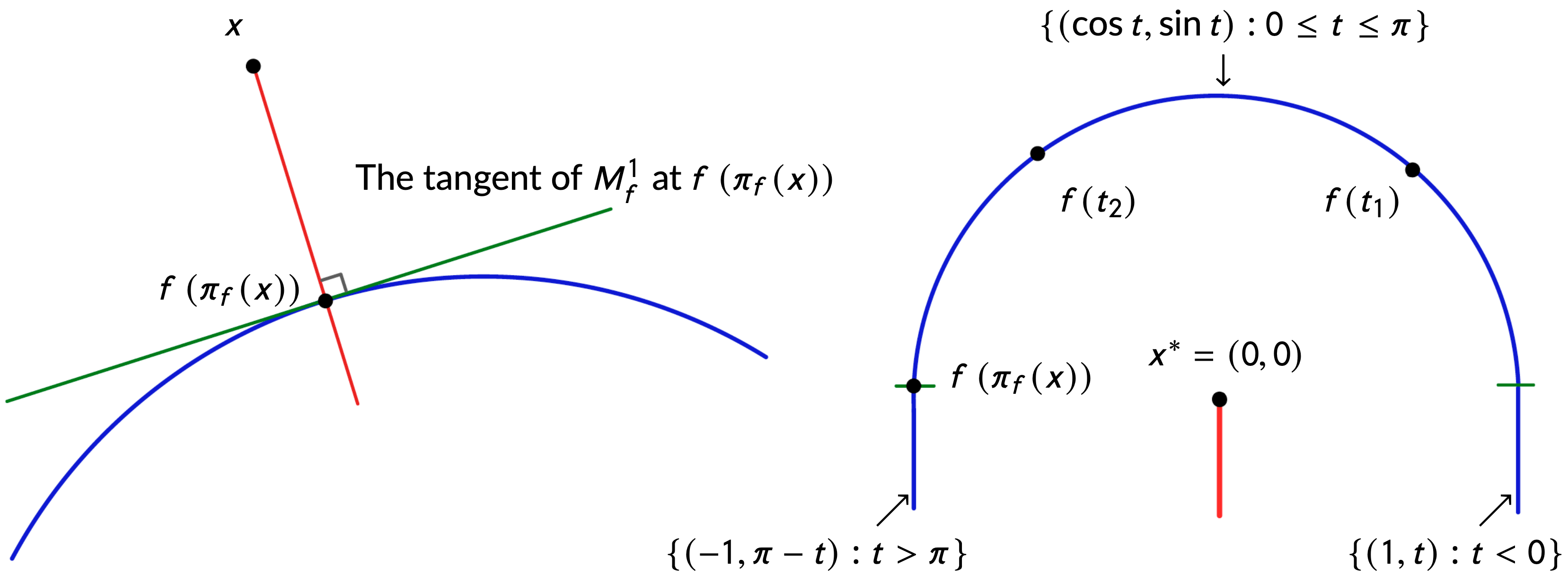}
	\end{center}
	\caption{The left panel illustrates projection indices for $d=1$. In the right panel, $x^*$ is at the center of a semicircle. $\mathcal{A}_f(x^*)=[0,\pi]$ is compact, $\pi_f(x^*)=\pi$ and $\Vert x^*-f(t_1)\Vert_{\R^D}$$=\Vert x^*-f(t_2)\Vert_{\R^D}$$=\Vert x^*-f\left(\pi_f(x^*)\right)\Vert_{\R^D}=$ $dist(x^*,f)$ with $t_1\neq t_2$. All the points in $\{(0,y):y\le 0\}$ (the red line) are ambiguity points.}\label{Projection}
\end{figure}

The projection index $\pi_f$ in (\ref{eq:projection}) is well-defined under HS conditions when $d=1$. We generalize $\pi_f$ for all intrinsic dimensions $d$ under a less stringent condition. Intuitively, the projection index of $x$ to $M^d_f$ is a parameter $t$ such that $f(t)$ is closest to $x$ (left panel of Figure \ref{Projection}). However, there might be more than one $t$ such that $\Vert x-f(t)\Vert_{\R^D}=\inf_{t'\in\R^d}\Vert x-f(t')\Vert_{\R^D}=:dist(x,f)$, resulting in ambiguity in choosing $t$ as shown in the right panel of Figure \ref{Projection}. To remove this ambiguity, we introduce the following function space.
\begin{align*}
C_\infty\left(\R^d\rightarrow\R^D\right)=\left\{f\in C\left(\R^d\rightarrow\R^D\right): \lim_{\Vert t\Vert_{\R^d}\rightarrow\infty} \left\Vert f(t) \right\Vert_{\R^D}=\infty\right\},
\end{align*} 
In applications, this function space is not restrictive. Since a data set with finite sample size is always bounded, we are concerned with fitting functions in that bounded domain. Therefore the behavior of a $C_\infty$ map as $\Vert t\Vert_{\R^d}\rightarrow\infty$ does not limit the applications of this framework. This approach is similar to focusing on the segment of a simple linear regression line within the range of an observed independent variable, even though the fitted line is unbounded. Based on these notations, we have the following theorem.
\begin{theorem}\label{fundation of projection index}
	If $f\in C_\infty(\R^d\rightarrow\R^D)$, then the $d$-dimensional set $$\mathcal{A}_f(x)=\left\{t\in\R^d:\left\Vert x-f(t)\right\Vert_{\R^D}=dist(x,f)\right\}$$ 
	is nonempty and compact for all $x\in\R^D$. 
\end{theorem}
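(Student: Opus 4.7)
The plan is to use a coercivity argument: the $C_\infty$ condition forces $\|x-f(t)\|_{\R^D}$ to become large as $\|t\|_{\R^d}\to\infty$, so the search for the minimum can be restricted to a compact ball, and the standard Weierstrass extreme value argument then yields both nonemptiness and compactness.

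First I would fix $x\in\R^D$ and choose any reference point, say $t_0=0$, and set $r:=\|x-f(t_0)\|_{\R^D}$, so that $\mathrm{dist}(x,f)\le r<\infty$. Using the $C_\infty$ hypothesis, I can pick $R>0$ so large that $\|t\|_{\R^d}>R$ implies $\|f(t)\|_{\R^D}>\|x\|_{\R^D}+r+1$. By the reverse triangle inequality, for such $t$,
\[
\|x-f(t)\|_{\R^D}\ \ge\ \|f(t)\|_{\R^D}-\|x\|_{\R^D}\ >\ r+1\ >\ \mathrm{dist}(x,f).
\]
Hence no minimizing $t$ can lie outside $\overline{B_R(0)}\subset\R^d$, and in fact
\[
\mathrm{dist}(x,f)\ =\ \inf_{t\in\overline{B_R(0)}}\|x-f(t)\|_{\R^D}.
\]

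For nonemptiness, I would observe that the map $t\mapsto\|x-f(t)\|_{\R^D}$ is continuous on $\R^d$ (since $f$ is continuous and the norm is continuous), and the domain $\overline{B_R(0)}$ is compact. The classical extreme value theorem then supplies a minimizer $t^*\in\overline{B_R(0)}$, which belongs to $\mathcal{A}_f(x)$. For compactness, I would write $\mathcal{A}_f(x)=g^{-1}(\{\mathrm{dist}(x,f)\})$ where $g(t):=\|x-f(t)\|_{\R^D}$ is continuous, so $\mathcal{A}_f(x)$ is closed in $\R^d$; together with the inclusion $\mathcal{A}_f(x)\subset\overline{B_R(0)}$ established above, boundedness follows, and the Heine--Borel theorem yields compactness.

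No step looks like a real obstacle; the only thing to be careful about is correctly invoking the coercive behavior encoded by $C_\infty$ to kill contributions from large $\|t\|_{\R^d}$, since the minimization is a priori over the unbounded domain $\R^d$. The reverse triangle inequality plus the choice of $R$ is precisely what converts the $C_\infty$ condition into the boundedness of the sublevel sets of $t\mapsto\|x-f(t)\|_{\R^D}$, after which the result reduces to the standard compact-domain Weierstrass argument.
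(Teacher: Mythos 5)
Your proof is correct and follows essentially the same route as the paper's: use the $C_\infty$ coercivity to confine the infimum to a compact ball, apply Weierstrass for nonemptiness, and obtain compactness as a closed subset of that ball (the paper phrases closedness via $\mathcal{A}_f(x)=\overline{B_d}(0,M)\cap f^{-1}(\overline{B_D}(x,\mathrm{dist}(x,f)))$ rather than as $g^{-1}(\{\mathrm{dist}(x,f)\})$, but these are the same observation). Your write-up is, if anything, slightly more explicit about how the reverse triangle inequality extracts the needed $R$ from the $C_\infty$ condition, which the paper leaves implicit.
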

\noindent The proof of Theorem \ref{fundation of projection index} is in the Appendix. The condition $f\in C_\infty(\R^d\rightarrow\R^D)$ is necessary for Theorem \ref{fundation of projection index} to hold as shown by the following example. Let $f(t)=(\frac{e^t}{1+e^t},0)^T\notin C_\infty(\R^1\rightarrow\R^2)$ and $x=(-1,0)^T$, then $\inf_{t\in\R}\Vert f(t)-x\Vert_{\R^2}=\inf_{t\in\R}\vert\frac{e^t}{1+e^t}+1\vert=1$. However, $\vert\frac{e^t}{1+e^t}+1\vert>1$ for all $t$, then we have $\mathcal{A}_f(x)=\emptyset$. We propose a generalized definition of $\pi_f$ as follows. Theorem \ref{fundation of projection index} guarantees that generalized $\pi_f$ is well-defined. 
\begin{definition}\label{Def 3}
	Let $f\in C_\infty(\R^d\rightarrow\R^D)$. (i) If $\mathcal{A}_f(x)$ contains more than one element, then $x$ is called an ambiguity point of $f$. (ii) For any $x\in\R^D$, the projection index $\pi_f(x)$ is defined as $\pi_f(x)=(t^*_1, t^*_2,\cdots, t^*_d)$, where
	\begin{align}\label{eq:projection1}
	    \notag & t^*_1=\max\left\{t_1:(t_1, t_2,\cdots, t_d)\in \mathcal{A}_f(x)\right\},\\ 
	& t^*_j=\max\left\{ t_j: (t^*_1, \cdots, t^*_{j-1}, t_j, \cdots, t_d)\in\mathcal{A}_f(x)\right\},\ \ \ j=2,3,\cdots,d. 
	\end{align}
\end{definition}
\noindent The compactness of $\mathcal{A}_f(x)$ implies $\pi_f(x)\in\mathcal{A}_f(x)$. The projection index in (\ref{eq:projection1}) is a generalization of the projection index defined in (\ref{eq:projection}). Therefore, we use the same notation $\pi_f$ to denote both projection indices. In defining the projection index (\ref{eq:projection1}), we replace the restrictive HS conditions with the less stringent condition $f\in C_\infty(\mathbb{R}^d\rightarrow\mathbb{R}^D)$ and allow $d$ to be any positive integer. When $d=1$, if $f\in C_\infty$ and $f$ satisfies HS conditions, the projection index in (\ref{eq:projection1}) is the same as that in (\ref{eq:projection}). The following theorem, whose proof is in Appendix, implies that $\pi_f(X)$ is a random $d$-vector if $X$ is a random $D$-vector.
\begin{theorem}\label{measurability of projection indices}
	If $f\in C_\infty(\R^d\rightarrow\R^D)$, then (i) $\pi_f$ is measurable, and (ii) $\{\pi_f(x):x\in B\}$ is bounded when $B\subset\R^D$ is compact. Furthermore, if $f$ has no ambiguity point in an open set $U\subset\R^D$ and is a homeomorphism, then (iii) $\pi_f$ is continuous on $U$.
\end{theorem}


\section{Principal Manifolds}\label{definitions of principal manifolds}

For a random $D$-vector $X$, its first $d$ linear principal components are equivalent to the $d$-dimensional hyperplane defined by $\arg\min_{L\in\mathscr{L}}\mathbb{E}\left\Vert X-\Pi_{L}(X)\right\Vert_{\R^D}^2$, where $\mathscr{L}$ is the collection of all $d$-dimensional hyperplanes in $\R^D$ and $\Pi_{L}(X)$ is the projection of $X$ to the hyperplane $L$. We propose a principal manifold framework generalizing PCA, replacing $\mathscr{L}$ with a Sobolev space. In the sequel, all derivatives are \textit{generalized derivatives} defined on $\mathscr{D}'(\R^d)$, where $\mathscr{D}'(\R^d)$ is the collection of generalized functions on $\R^d$. Definitions of $\mathscr{D}'(\R^d)$ and generalized derivatives are provided in Chapter 6 of \cite{rudin1991functional}. The only exception is that the derivatives referred to in the definition of the function space $C^k(\R^d\rightarrow\R^D)$ are still understood in the classical sense. Generalized derivatives, also called ``derivatives of distributions," apply to all locally integrable functions that may not be classically differentiable. They free us from smoothness assumptions in theoretical arguments. 

\subsection{Sobolev Spaces}\label{section: Sobolev Spaces}

We introduce the following function spaces of Sobolev type.
\begin{align*}
& \nabla^{-\otimes 2}L^2\left(\R^d\right)=\left\{f\in\mathscr{D}'(\R^d):\Vert\nabla^{\otimes 2} f\Vert_{\R^{d\times d}}\in L^2(\R^d)\right\},\\ 
& H^2\left(\R^d\right)=\left\{f\in L^2(\R^d):\left\Vert\nabla f\right\Vert_{\R^d},\Vert\nabla^{\otimes 2} f\Vert_{\R^{d\times d}}\in L^2(\R^d) \right\},\\
& \nabla^{-\otimes 2}L^2\left(\Omega\right)=\left\{f\vert_\Omega: f\in \nabla^{-\otimes 2}L^2(\R^d)\right\},\\
& H^2\left(\Omega\right)=\left\{f\vert_\Omega: f\in H^2(\R^d)\right\},
\end{align*}
where $\Omega$ is any open subset of $\R^d$ with a smooth boundary, $f\vert_\Omega$ denotes the restriction of $f$ to $\Omega$, and 
\begin{itemize}
    \item $\nabla f$ denotes the vector-valued function $t\mapsto (\frac{\partial f}{\partial t_1}(t), \frac{\partial f}{\partial t_2}(t), \cdots, \frac{\partial f}{\partial t_d}(t))^T=:\nabla f(t)$, i.e., the gradient of $f$, and $\Vert\nabla f\Vert_{\R^d}$ denotes the scalar-valued function $t\mapsto\Vert \nabla f(t)\Vert_{\R^d}=(\sum_{i=1}^d\vert\frac{\partial f}{\partial t_i}(t)\vert^2)^{1/2}$; 
    
    \item $\nabla^{\otimes2}f$ denotes the matrix-valued function $t\mapsto (\frac{\partial^2f}{\partial t_i \partial t_j}(t))_{1\le i,j\le d}$, i.e., the Hessian matrix of $f$; $\Vert\nabla^{\otimes 2} f\Vert_{\R^{d\times d}}$ denotes the scalar-valued function $t\mapsto \Vert\nabla^{\otimes 2} f(t)\Vert_{\R^{d\times d}}=(\sum_{i,j=1}^d\vert\frac{\partial^2f}{\partial t_i\partial t_j}(t)\vert^2)^{1/2}$.
    
    \item $\Vert\nabla f\Vert_{\R^d}\in L^2(\R^d)$ and $\Vert\nabla^{\otimes2} f\Vert_{\R^d}\in L^2(\R^d)$ denote $\Vert\nabla f\Vert^2_{L^2(\R^d)}=\int_{\R^d} \sum_{i=1}^d\vert\frac{\partial f}{\partial t_i}(t)\vert^2 dt<\infty$ and $\Vert\nabla^{\otimes2} f\Vert^2_{L^2(\R^d)}=\int_{\R^d} \sum_{i,j=1}^d\vert\frac{\partial^2f}{\partial t_i\partial t_j}(t)\vert^2 dt<\infty$, respectively.
    
    \item $\nabla^{\otimes2}$ is an abbreviation of the tensor product $\nabla\otimes\nabla=(\frac{\partial^2}{\partial t_i \partial t_j})_{i\le i,j \le d}$.
\end{itemize}
Section 1.5 of \cite{duchon1977splines} implies the following result. 
\begin{lemma}\label{space equivalence}
	If $\Omega$ is bounded, then $\nabla^{-\otimes 2}L^2(\Omega)=H^2(\Omega)$.
\end{lemma}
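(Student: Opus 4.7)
The plan is to prove the two inclusions $H^2(\Omega)\subseteq \nabla^{-2}L^2(\Omega)$ and $\nabla^{-2}L^2(\Omega)\subseteq H^2(\Omega)$ separately. The first is immediate from the definitions: any $g\in H^2(\R^d)$ has $\nabla^2 g\in L^2(\R^d)$, so $g\in \nabla^{-2}L^2(\R^d)$, and restricting to $\Omega$ gives the containment at the level of restrictions.

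For the nontrivial direction, let $u=f|_\Omega$ with $f\in\mathscr{D}'(\R^d)$ and $\nabla^2 f\in L^2(\R^d)$. I would fix a bounded open set $\Omega'$ with smooth boundary such that $\overline{\Omega}\subset\Omega'$, show that $f|_{\Omega'}$ lies in the intrinsic Sobolev space $W^{2,2}(\Omega'):=\{v\in L^2(\Omega'): \nabla v,\,\nabla^2 v\in L^2(\Omega')\}$, and then invoke the Calderón--Stein extension theorem to produce a bounded linear operator $E: W^{2,2}(\Omega')\to H^2(\R^d)$ (which exists because $\Omega'$ is bounded with smooth boundary). Setting $g=E(f|_{\Omega'})\in H^2(\R^d)$ gives $g|_\Omega=u$, so $u\in H^2(\Omega)$ in the restriction sense used by the paper.

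The crux of the argument is passing from the a priori distributional hypothesis $\nabla^2 f\in L^2(\R^d)$ to membership of $f|_{\Omega'}$ in $W^{2,2}(\Omega')$. I would mollify, writing $f_\varepsilon=f*\rho_\varepsilon\in C^\infty(\R^d)$ so that $\nabla^2 f_\varepsilon\to\nabla^2 f$ in $L^2$. On the bounded smooth domain $\Omega'$, subtracting the affine Taylor polynomial of $f_\varepsilon$ at a fixed interior point normalizes zeroth- and first-order moments, so two successive applications of the Poincaré--Wirtinger inequality yield a bound of the form $\|f_\varepsilon-P_\varepsilon\|_{W^{2,2}(\Omega')}\le C(\Omega')\|\nabla^2 f_\varepsilon\|_{L^2(\Omega')}$ uniform in $\varepsilon$. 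Passing to the limit gives $f-P\in W^{2,2}(\Omega')$ for some polynomial $P$ of degree at most one; since $P$ is smooth and $\Omega'$ is bounded, $P\in W^{2,2}(\Omega')$ as well, and therefore $f|_{\Omega'}\in W^{2,2}(\Omega')$.

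The main obstacle is exactly this regularity-modulo-polynomials step, which is the content the lemma cites from Section 1.5 of Duchon (1977); once it is in place, the Sobolev extension theorem finishes the argument. A secondary bookkeeping point is to check that the equivalence between the restriction definition $H^2(\Omega)=\{g|_\Omega:g\in H^2(\R^d)\}$ and the intrinsic space $W^{2,2}(\Omega)$ really is two-sided on smooth bounded domains: the $\supseteq$ direction is the extension theorem just cited, while the $\subseteq$ direction is trivial because restriction preserves each $L^2$ norm of a derivative.
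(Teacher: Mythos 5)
The paper does not prove this lemma; it is established purely by citing Section~1.5 of Duchon (1977). Your sketch therefore supplies an argument the paper omits. The route you outline---mollify, obtain a Poincar\'e-type estimate modulo degree-one polynomials, then apply a Stein extension operator---is the classical Deny--Lions argument that underlies the cited result, and in broad outline it does prove the lemma. The auxiliary domain $\Omega'$ is unnecessary, since the paper already assumes $\Omega$ itself is bounded with smooth boundary, so one may extend directly from $\Omega$; this is cosmetic.

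There is, however, one step as written that would fail: the affine polynomial must be chosen by matching moments, not by taking the Taylor polynomial at a fixed interior point. If $P_\varepsilon$ is the degree-one Taylor polynomial of $f_\varepsilon$ at some $x_0\in\Omega$, then $f_\varepsilon-P_\varepsilon$ vanishes to first order at $x_0$, but for $d\ge 2$ pointwise vanishing does not yield a Poincar\'e bound: there is no continuous embedding of $H^1(\Omega)$ into $C(\overline{\Omega})$ once $d\ge 2$, so no inequality of the form $\|u\|_{L^2(\Omega)}\le C\|\nabla u\|_{L^2(\Omega)}$ can follow from $u(x_0)=0$ alone. Concretely, for $d\ge 3$, mollified versions of $u_\delta(x)=\min\{|x-x_0|/\delta,1\}$ vanish at $x_0$ and have $\|u_\delta\|_{L^2(\Omega)}$ bounded away from zero while $\|\nabla u_\delta\|_{L^2(\Omega)}\to 0$ as $\delta\to 0$. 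The correct normalization is to take $P_\varepsilon(x)=a_\varepsilon+b_\varepsilon\cdot x$ with $b_\varepsilon$ the mean of $\nabla f_\varepsilon$ over $\Omega$ and $a_\varepsilon$ chosen so that $f_\varepsilon-P_\varepsilon$ has zero mean over $\Omega$; then Poincar\'e--Wirtinger applied first to $\nabla(f_\varepsilon-P_\varepsilon)$ and then to $f_\varepsilon-P_\varepsilon$ yields the uniform bound $\|f_\varepsilon-P_\varepsilon\|_{W^{2,2}(\Omega)}\le C(\Omega)\|\nabla^2 f_\varepsilon\|_{L^2(\Omega)}$. With that fix in place, the passage to the limit (weak compactness of $f_\varepsilon-P_\varepsilon$ in $W^{2,2}(\Omega)$, together with $\mathscr{D}'$-convergence of $f_\varepsilon$ to $f$, which forces $P_\varepsilon$ to converge in the finite-dimensional space of degree-one polynomials) and the Stein extension step complete the proof as you intend.
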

\noindent If two functions are equal to each other almost everywhere with respect to the Lebesgue measure on $\R^d$, we identify them as the same function. Then we have the following regularity theorem.
\begin{theorem}\label{Sobolev embedding}
	$\nabla^{-\otimes 2}L^2(\R^d)\subset C^k(\R^d)$, for $k<2-\frac{d}{2}$.
\end{theorem}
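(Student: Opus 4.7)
The plan is to localize the problem to bounded domains via Lemma~\ref{space equivalence} and then invoke the classical Sobolev embedding theorem on those domains. The homogeneous space $\nabla^{-2}L^2(\R^d)$ itself is not a subset of $L^2(\R^d)$ (it contains, e.g., affine functions), so one cannot directly quote the embedding $H^2(\R^d)\hookrightarrow C^k(\R^d)$; but $C^k$ regularity is a \emph{local} property, and locally $\nabla^{-2}L^2$ coincides with $H^2$.

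Concretely, I would fix $f\in\nabla^{-2}L^2(\R^d)$ and an arbitrary bounded open $\Omega\subset\R^d$ with smooth boundary. By Lemma~\ref{space equivalence}, $f|_\Omega\in H^2(\Omega)$. The standard Sobolev embedding for $H^s$ on bounded smooth (or Lipschitz) domains (see, e.g., Theorem~4.12 of Adams and Fournier) gives a continuous inclusion $H^2(\Omega)\hookrightarrow C^k(\overline{\Omega})$ whenever $k$ is a nonnegative integer with $k<2-d/2$, since in that range $2-k>d/2$ ensures the threshold of the classical embedding is met. Thus $f|_\Omega$ has a representative in $C^k(\overline{\Omega})$, which under the almost-everywhere identification convention adopted just before the theorem statement means $f|_\Omega\in C^k(\overline{\Omega})$.

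To pass from bounded domains to all of $\R^d$, I would exhaust $\R^d$ by, say, the open balls $\Omega_n=B(0,n)$ (each of which has smooth boundary). The conclusion on each $\Omega_n$ yields a continuous $C^k$ representative; uniqueness of continuous representatives on overlaps guarantees these representatives glue together to a single element of $C^k(\R^d)$, proving the inclusion.

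The main (and only) obstacle is bookkeeping of the admissible range of $k$: I must be careful that the classical Sobolev embedding is quoted in its integer-$C^k$ form (not the Hölder form), so that $k<2-d/2$ corresponds exactly to the allowed values $k=0$ for $d=2,3$ and $k\in\{0,1\}$ for $d=1$. Everything else is routine once Lemma~\ref{space equivalence} is in hand.
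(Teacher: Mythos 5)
Your proposal is correct and follows essentially the same route as the paper: localize to a bounded open ball via Lemma~\ref{space equivalence}, apply the classical Sobolev embedding $H^2(\Omega)\hookrightarrow C^k(\overline\Omega)$ on bounded domains, and let $\Omega$ range over an exhaustion of $\R^d$. The only differences are cosmetic — you cite Adams--Fournier rather than Rudin for the embedding, and you spell out the gluing step that the paper compresses into ``the desired result follows as $\Omega$ is arbitrary.''
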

\begin{proof}
	Let $f\in\nabla^{-\otimes 2}L^2(\R^d)$ and $\Omega$ be any open ball in $\R^d$, Lemma \ref{space equivalence} implies $f\vert_\Omega\in H^2(\Omega)$. A Sobolev embedding theorem (Theorem 7.25 of \cite{rudin1991functional}) implies $f\vert_\Omega\in C^{k}(\Omega)$ for $k<2-\frac{d}{2}$. Then the result follows as $\Omega$ is arbitrary.
\end{proof}
\noindent For simplicity, define $\nabla^{-\otimes 2}L^2(\R^d\rightarrow\R^D)=\{f(t)=(f_1(t), f_2(t), \cdots, f_D(t))^T: f_l\in \nabla^{-\otimes 2}L^2(\R^d)\mbox{ for all }l=1,2,\cdots, D\}$, and $C_{\infty}\bigcap \nabla^{-\otimes 2}L^2=C_{\infty}\bigcap \nabla^{-\otimes 2}L^2(\R^d\rightarrow\R^D)=C_{\infty}(\R^d\rightarrow\R^D)\bigcap\nabla^{-\otimes 2}L^2(\R^d\rightarrow\R^D)$. 

\subsection{Definition of Principal Manifolds}

As mentioned in Section \ref{introduction}, a problem of the model in \cite{kegl2000learning}, which is equivalent to (\ref{Smola et al}) with $\Vert Pf\Vert_{\mathscr{H}}=\Vert f'\Vert_{L^2}$, is that the fitted $f$ is not necessarily differentiable exactly everywhere. The main reason for this limitation is that the regularization from $\Vert f'\Vert^2_{L^2}$ may not provide enough penalty on the non-smoothness of $f$. We propose principal manifolds with higher regularity by replacing $f'$ with the second derivative $f''$. When $d=1$ and $f$ is arc-length parameterized, $\Vert f''(t)\Vert_{\R^D}$ is the curvature of $f$ at $t$ (Chapter 1.5 of \cite{do2016differential}). Additionally, $\int \Vert f''(t)\Vert_{\R^D}^2 dt$ is usually called the ``total squared curvature" of $f$ (e.g., \cite{enomoto2013total}) and coincides with the penalty term of the cubic smoothing spline framework. For surfaces $f(t_1, t_2)=(f_1(t_1, t_2),\cdots,f_D(t_1, t_2))^T$, the penalty $\sum_{l=1}^D\int_{\mathbb{R}^d} \sum_{i,j=1}^2\vert\frac{\partial^2 f_l}{\partial t_i \partial t_j}(t)\vert^2dt$ in the thin plate spline framework measures the roughness of surfaces $M_f^2$ (see \cite{koenker2004penalized}). For a general intrinsic dimension $d\ge 1$ and the map $f(t)=(f_1(t), f_2(t),\cdots, f_D(t))^T$ with $t\in\R^d$, we generalize these two penalty terms to $\Vert\nabla^{\otimes 2}f \Vert_{L^2(\mathbb{R}^d)}^2=\sum_{l=1}^D \Vert \nabla^{\otimes 2}f_l \Vert^2_{L^2(\R^d)} = \sum_{l=1}^D\int_{\mathbb{R}^d} \sum_{i,j=1}^d\vert\frac{\partial^2 f_l}{\partial t_i \partial t_j}(t)\vert^2dt$. In the sequel, we apply $\Vert\nabla^{\otimes 2}f \Vert_{L^2(\mathbb{R}^d)}^2$ to measure the roughness of $M_f^d$. We broaden the use of the name ``total squared curvature" defined for curves and call its high-dimensional generalization $\Vert\nabla^{\otimes 2}f \Vert_{L^2(\mathbb{R}^d)}^2$ \textit{total squared curvature} as well. The tolerance of large total squared curvature increases the complexity and decreases the stability of fitted manifolds. Therefore, we penalize fitted manifolds with large total squared curvature. These considerations motivate us to define principal manifolds as follows.
\begin{definition}\label{Def 4}
	Let $X$ be a random $D$-vector associated with the probability measure or density function $\mathbb{P}$ such that $X$ has compact support $\supp(\mathbb{P})$ and finite second moments. Let $f,g\in C_\infty\bigcap \nabla^{-\otimes 2}L^2(\R^d\rightarrow\R^D)$ and $\lambda\in[0,\infty]$, we define the following functionals
	\begin{align}\label{PMSEF}
	\mathcal{K}_{\lambda,\mathbb{P}}(f,g)=\mathbb{E}\left\Vert X-f\left(\pi_g(X)\right)\right\Vert^2_{\mathbb{R}^D}+\lambda\left\Vert\nabla^{\otimes 2}f\right\Vert_{L^2(\mathbb{R}^d)}^2, \ \ \ \ \mathcal{K}_{\lambda,\mathbb{P}}(f)=\mathcal{K}_{\lambda,\mathbb{P}}(f,f).
	\end{align}
	A manifold $M_{f^*}^d$ determined by $f^*$ is called a principal manifold for $X$ (or $\mathbb{P}$) with the tuning parameter $\lambda$ if 
	\begin{align}\label{def: PM}
	& f^*_\lambda=\arg\min_{f\in\mathscr{F}(\mathbb{P})}\mathcal{K}_{\lambda,\mathbb{P}}(f),\\ 
	\notag & \mbox{ where }\mathscr{F}(\mathbb{P})=\left\{f\in C_\infty\cap \nabla^{-\otimes 2}L^2(\R^d\rightarrow\R^D):\sup_{x\in\supp(\mathbb{P})}\left\Vert\pi_f(x)\right\Vert_{\R^d}=1\right\}.
	\end{align}
\end{definition}
\noindent Since $\lambda$ above is allowed to be $\infty$, we adopt the convention $\infty\times0=0$ as $\lim_{\lambda\rightarrow\infty}(\lambda\times0)=0$. Then $\mathcal{K}_{\infty,\mathbb{P}}(f)<\infty$ only if $\Vert\nabla^{\otimes 2} f\Vert_{L^2(\mathbb{R}^d)}=0$. Suppose $f^*_\lambda$ is derived, then $\pi_{f^*_\lambda}(X)$ gives a $d$-dimensional parameterization of $X$. Theorem \ref{measurability of projection indices} (iii) implies the continuity of $\pi_{f^*_\lambda}(X)$ under some conditions on $f^*_\lambda$. The constraint $\sup\{\left\Vert\pi_f(x)\right\Vert_{\R^d}: x\in\supp(\mathbb{P})\}=1$ is well-defined (see Theorem \ref{measurability of projection indices} (ii)) and restricts the parameterizations $\{\pi_f(x):x\in\supp(\mathbb{P})\}$ exactly in the unit ball $\{t\in\R^d:\Vert t\Vert_{\R^d}\le1\}$. This restriction is an analog of the arc-length parameterization in the scenario $d=1$. Additionally, Theorem \ref{Sobolev embedding} implies the regularity of principal manifolds, i.e., $f^*_\lambda\in C^{k}(\R^d\rightarrow\R^D)$ for $k<\max\{2-\frac{d}{2},1\}$. The definition above is a special case of the regularized principal manifolds (\ref{Smola et al}) defined by \cite{smola2001regularized}, where $P=\nabla^{\otimes 2}$, $\mathscr{H}=L^2(\mathbb{R}^d)$, and $\mathscr{F}=\mathscr{F}(\mathbb{P})$. 

Motivated by the ``projection-adaptation'' algorithm in Section 5 of \cite{smola2001regularized}, we apply its iterative fashion to estimate principal manifolds. Specifically, we estimate $f^*_\lambda=\arg\min_{f\in\mathscr{F}(\mathbb{P})}\mathcal{K}_{\lambda,\mathbb{P}}(f)$ using 
\begin{align}\label{main iteration}
f_{(n)}=\arg\min_{f}\left\{\mathcal{K}_{\lambda,\mathbb{P}}\left(f,f_{(n-1)}\right): f\in C_\infty\cap \nabla^{-\otimes 2}L^2\left(\R^d\rightarrow\R^D\right) \right\},\ \ \ n=1,2,\cdots,\ \ \lambda\ge0.
\end{align}
Suppose (\ref{main iteration}) stops when $n=n^*$, and we obtain $f_{(n^*)}\in C_\infty\bigcap\nabla^{-\otimes2}L^2(\R^d\rightarrow\R^D)$. Then an estimate of $\arg\min_{f\in\mathscr{F}(\mathbb{P})}\mathcal{K}_{\lambda,\mathbb{P}}(f)$ is given by $\widehat{f^*_\lambda}(t)=f_{(n^*)}(\kappa t)$ with $\kappa=\sup\{\Vert \pi_{f_{(n^*)}}(x)\Vert_{\R^d}:x\in\supp(\mathbb{P})\}<\infty$ (see Theorem \ref{measurability of projection indices} (ii)), and $\widehat{f^*_\lambda}\in\mathscr{F}(\mathbb{P})$. Computing $\pi_{f_{(n)}}(X)$ in $\mathcal{K}_{\lambda,\mathbb{P}}(f,f_{(n)})$ implicitly corresponds to the ``projection" step discussed in Section \ref{geometry}, where our results guarantee that $\pi_{f_{(n)}}(X)$ is well-defined. Minimizing $\mathcal{K}_{\lambda,\mathbb{P}}(f,f_{(n)})$ with respect to $f$ corresponds to the ``adaptation" step. Since $f_{(n)} \in C_\infty\bigcap \nabla^{-\otimes 2}L^2$ for all $n$, Theorem \ref{Sobolev embedding} guarantees the regularity of $f_{(n)}$ for all $n$. The iteration (\ref{main iteration}) usually approximates local minima. Hence, successful implementation of the iteration depends on the choice of the starting values. An initialization strategy with a partial implementation of ISOMAP is illustrated in Section \ref{simulations}. 

\subsection{Two Special Cases}

In this subsection, we discuss two extreme  special cases of the tuning parameter $\lambda$: $\lambda = \infty$ and $\lambda = 0$. We show that these two cases imply linear PCA and the HS principal curve algorithm, respectively. Besides, we discuss potential issues that may arise when using these two extreme cases in applications, leading to the consideration of other values of $\lambda$ in our proposed framework. The following theorem establishes the fact that $\lambda=\infty$ implies linear PCA.
\begin{theorem}\label{principal manifolds and PCA}
Suppose $X$ is a random $D$-vector with finite second moments, $\mathbf{v}_1,\mathbf{v}_2,\cdots,\mathbf{v}_D$ and $e_1, e_2, \cdots, e_D$ are eigenvectors and eigenvalues of the covariance matrix of $X$, respectively. $\mathbf{v}_i$ corresponds to $e_i$ and $e_1\ge\cdots\ge e_d>e_{d+1}\ge\cdots\ge e_D$. Then the principal manifold for $X$ with tuning parameter $\lambda=\infty$ is the linear manifold $\left\{\mathbb{E}X+\sum_{i=1}^d\alpha_i\mathbf{v}_i: \alpha_i\in\R^1\right\}$.
\end{theorem}
\noindent Its proof is in the Appendix. Theorem \ref{principal manifolds and PCA} implies that a large $\lambda$ shrinks principal manifolds towards PCA. However, if the underlying manifolds are nonlinear, the linear manifolds with zero curvature are not satisfactory estimators.

When $\lambda=0$, the estimation of principal manifolds may result in overfitting. For example, let $\mathbb{P}$ be the empirical distribution $\frac{1}{I}\sum_{i=1}^I\delta_{x_i}$ for the data $\{x_i\}_{i=1}^I$. For any $f\in C^\infty(\R^d\rightarrow\R^D)$ satisfying $\{x_i\}_{i=1}^I\subset M_f^d$, i.e., $f$ passes through every data point, $\sup_{i}\Vert\pi_f(x_i)\Vert_{\R^d}=1$, and $f(t)$ is equal to an affine function when $\Vert t\Vert_{\R^d}>M$ for a sufficiently large $M>0$, we have $\mathcal{K}_{0,\mathbb{P}}(f)=0$. Then $M_f^d$ is a principal manifold for $\mathbb{P}$ and overfits the data $\{x_i\}_{i=1}^I$. Additionally, $\lambda=0$ results in self-consistency (\ref{self consistency}), potentially resulting in the saddle issue discussed in Section 1. Specifically, the HS principal curve algorithm is of the following form.
\begin{equation}\label{HS iteration}
f_{(n)}=\mathcal{T}_{HS}f_{(n-1)},\ \  \mbox{ where } \mathcal{T}_{HS} f_{(n-1)}(t)=\mathbb{E}\left(X\big\vert\pi_{f_{(n-1)}}(X)=t\right), \ \ n=1, 2,\cdots.
\end{equation}
If $f_{(n)}$ converges to $f$, then (\ref{HS iteration}) implies (\ref{self consistency}). The following theorem implies that (\ref{HS iteration}) is a special case of (\ref{main iteration}) with $\lambda=0$. 
\begin{theorem} If $f_{(n-1)}$ and $\mathcal{T}_{HS} f_{(n-1)}\in C_\infty\bigcap \nabla^{-\otimes 2}L^2$, then $$\mathcal{T}_{HS} f_{(n-1)}=\arg\min_f \left\{\mathcal{K}_{0,\mathbb{P}}\left(f,f_{(n-1)}\right): f\in C_\infty\bigcap \nabla^{-\otimes 2}L^2 \right\}.$$ 
\end{theorem}
\begin{proof} Let $\mathscr{M}$ be the collection of measurable $\R^d\rightarrow\R^D$ maps. We have $\inf\{\mathcal{K}_{0,\mathbb{P}}(f, f_{(n-1)}):f\in C_\infty\bigcap \nabla^{-\otimes 2}L^2\}\ge\inf\{\mathbb{E}\Vert X-f(\pi_{f_{(n-1)}}(X))\Vert_{\R^D}^2:f\in\mathscr{M}\}=\mathbb{E}\Vert X-\mathbb{E}(X\vert \pi_{f_{(n-1)}}(X))\Vert_{\R^D}^2=\mathbb{E}\Vert X-\mathcal{T}_{HS} f_{(n)}(\pi_{f_{(n)}}(X))\Vert_{\R^D}^2$. Then $\mathcal{T}_{HS} f_{(n-1)}\in C_\infty\bigcap \nabla^{-\otimes 2}L^2(\R^d\rightarrow\R^D)$ implies the desired result.
\end{proof}

It is important to consider the case when $\lambda$ is positive but very close to 0. In this case, the total squared curvature $\Vert\nabla^{\otimes2}f^*_\lambda\Vert^2_{L^2(\R^d)}$ of $f^*_\lambda=\arg\min_{f\in\mathscr{F}(\mathbb{P})}\mathcal{K}_{\lambda, \mathbb{P}}(f)$ may be unreasonably large. The large value of the total squared curvature may result in numerical issues when performing optimization procedures. When $\lambda>0$, similar to the penalty of the ridge regression,  $\Vert\nabla^{\otimes2}f^*_\lambda\Vert^2_{L^2(\R^d)}$ is bounded by a positive $U_\lambda$, a number that depends on $\lambda$. As $\lambda$ decreases, the bound $U_\lambda$ tends to increase. Therefore, in some cases, we may have $\limsup_{\lambda\rightarrow0}\Vert\nabla^{\otimes2}f^*_\lambda\Vert^2_{L^2(\R^d)}=\infty$. We call this phenomenon ``curvature singularity at $\lambda=0$'' throughout this paper. If the curvature singularity at $\lambda=0$ holds, a fitted manifold $M^d_{f^*_\lambda}$ may have high roughness and be unstable for sufficiently small values of $\lambda$. Results in this subsection imply that $\lambda=\infty$ may mask the underlying curvature of the function, $\lambda=0$ may result in overfitting or the saddle issue, and small values of $\lambda$ may result in the curvature singularity at $\lambda=0$.

\section{A New Approach to Principal Manifold Estimation}\label{Joints of a random vector}

We present the outline of our proposed novel principal manifold estimation (PME) algorithm, briefly describing its three main steps - reduction, fitting, and tuning. This algorithm addresses two problems in the framework described in Section \ref{definitions of principal manifolds}: (i) the choice of $\lambda\in(0,\infty)$, and (ii) the reduction of the computational burden in implementing iteration (\ref{main iteration}) and elimination of effects of outliers. In image analysis applications, the size of data is usually very large, resulting in computational burden when applying manifold learning algorithms. One approach to addressing computational burden is subsampling, e.g., \cite{yue2016parameterization}. While leading to faster computation, subsampling may result in removing important sections of a given data set. Our proposed PME algorithm solves these two problems simultaneously. 
\begin{figure}[ht]
	\begin{center}
		\includegraphics[scale=0.305]{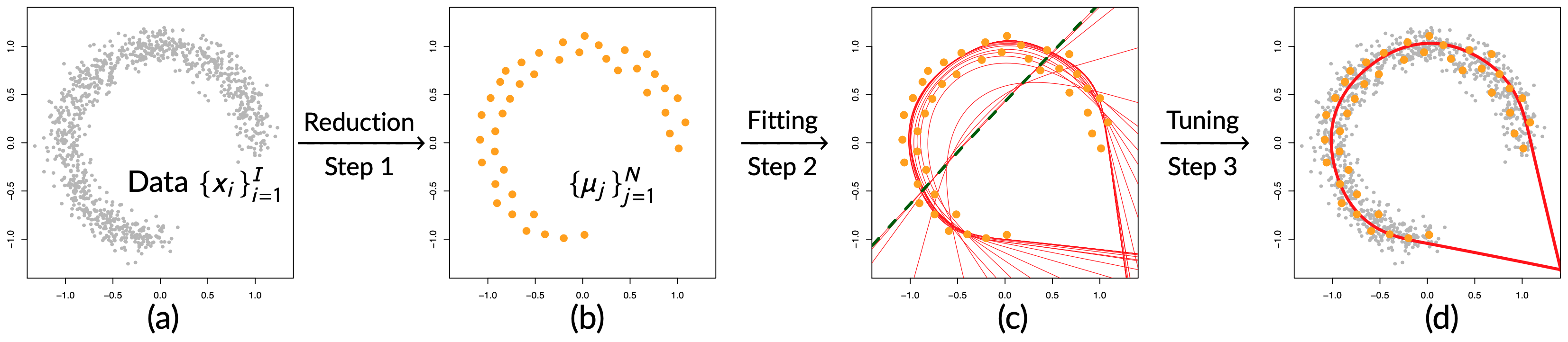}
	\end{center}
	\caption{(a) Data $\{x_i\}_{i=1}^I$ (gray). (b) Each dot (orange) denotes a $\mu_j$. (c) Estimated $\widehat{f}_\lambda$ (red curves) are associated with different $\lambda>0$. The straight line (green and dashed) is associated with $\lambda\rightarrow\infty$ and is an approximation of the first principal component. (d) Using $\{x_i\}_{i=1}^I$, we choose the optimal $\lambda^*$ and draw the corresponding $\widehat{f}_{\lambda^*}$ (red curve). }\label{flow chart}
\end{figure}

A step-by-step visualization of the proposed algorithm is in Figure \ref{flow chart}. In the first stage, the sample size $I$ of the data $\{x_i\}_{i=1}^I$ from $\mathbb{P}$ is reduced to obtain a collection of points $\{\mu_j\}_{j=1}^N$ with a smaller size $N$, where each $\mu_j$ is associated with a weight $\theta_j$ satisfying $\sum_{j=1}^N\theta_j=1$, such that $\{\mu_j\}_{j=1}^N$ preserve the geometric features of the underlying manifold and are less noisy than the original sample $\{x_i\}_{i=1}^I$. Then $\mathcal{K}_{\lambda, \widehat{Q}_N}(f)=\sum_{j=1}^N\theta_j\Vert \mu_j - f(\pi_f(\mu_j))\Vert_{\mathbb{R}^D}^2+\lambda\Vert \nabla^{\otimes 2} f\Vert^2_{L^2(\R^d)}$ approximates $\mathcal{K}_{\lambda, \mathbb{P}}(f)\approx\frac{1}{I}\sum_{i}^I\Vert x_i - f(\pi_f(x_i))\Vert_{\mathbb{R}^D}^2+\lambda\Vert\nabla^{\otimes2}f\Vert^2_{L^2(\R^d)}$ (see Theorem \ref{thm: penalized MSD approximation} in Section \ref{section: HDMDE Algorithm}), where $\mathcal{K}_{\lambda, \widehat{Q}_N}(f)$ is the functional in (\ref{PMSEF}) associated with the probability measure $\widehat{Q}_N=\sum_{j=1}^N\theta_j\delta_{\mu_j}$. This stage results in the reduction of computational burden and elimination of effects of outliers (see Figure \ref{fig:skeleton} (c) and (d)). In the second stage of this approach, for a preselected set of tuning parameters $\lambda>0$, we estimate $\widehat{f}_\lambda=\arg\min_{f}\mathcal{K}_{\lambda, \widehat{Q}_N}(f)$. The collection of estimated functions $\{\widehat{f}_\lambda\}_{\lambda>0}$ prevents the curvature singularity at $\lambda=0$. Finally, in the third stage, we choose an optimal tuning parameter $\lambda^*$ that preserves the geometric structure in the data while avoiding overfitting $\{\mu_j\}_{j=1}^N$. The following theorem implies that $\{\widehat{f}_\lambda\}_{\lambda>0}$ prevents the curvature singularity at $\lambda=0$, i.e, $\limsup_{\lambda\rightarrow0}\Vert\nabla^{\otimes2}\widehat{f}_\lambda\Vert^2_{L^2(\R^d)}<\infty$.
\begin{theorem}\label{bending energy upper bound}
    For all $\lambda>0$, let $\widehat{f}_\lambda=\arg\min_{f}\{\mathcal{K}_{\lambda,\widehat{Q}_N}(f): f\in \mathscr{F}(\widehat{Q}_N) \}$ with $\widehat{Q}_N=\sum_{j=1}^N\theta_j\delta_{\mu_j}$. Then we have the bound
    \begin{align*}
        & \sup\left\{\Vert\nabla^{\otimes2}\widehat{f}_\lambda\Vert_{L^2(\R^d)}^2: \lambda>0 \right\}\\
        & \le\inf \left\{\Vert\nabla^{\otimes2}f \Vert^2_{L^2(\R^d)}: f\in \mathscr{F}(\widehat{Q}_N)\mbox{, and }f\left(\pi_f(\mu_j)\right)=\mu_j\mbox{ for }j=1,2,\cdots,N\right\}=:U_N<\infty.
    \end{align*}
\end{theorem}
\begin{proof}
$\mathcal{W}_N=\{f\in \mathscr{F}(\widehat{Q}_N): f\left(\pi_f(\mu_j)\right)=\mu_j\mbox{ for }j=1,2,\cdots,N\}\ne\emptyset$ implies $U_N<\infty$. If $\sup_{\lambda>0}\Vert\nabla^{\otimes2}\widehat{f}_\lambda\Vert_{L^2(\R^d)}^2>U_N$, there exist $\tilde{\lambda}>0$ and $\tilde{f}\in\mathcal{W}_N$ such that $\Vert\nabla^{\otimes2}\widehat{f}_{\tilde{\lambda}}\Vert_{L^2(\R^d)}^2>\Vert\nabla^{\otimes2}\tilde{f}\Vert_{L^2(\R^d)}^2$. Then $\mathcal{K}_{\tilde{\lambda}, \widehat{Q}_N}(\tilde{f})=\tilde{\lambda}\Vert\nabla^{\otimes2}\tilde{f}\Vert_{L^2(\R^d)}^2<\mathcal{K}_{\tilde{\lambda}, \widehat{Q}_N}(\widehat{f}_{\tilde{\lambda}})$, which contradicts the definition of $\widehat{f}_{\tilde{\lambda}}=\arg\min_{f\in \mathscr{F}(\widehat{Q}_N)}\mathcal{K}_{\tilde{\lambda},\widehat{Q}_N}(f)$.
\end{proof}

\section{Step 1: Data Reduction} \label{section: The Reduction Step of PME}

The reduction step of the PME algorithm is motivated by the data generating mechanism in manifold learning tasks. In this Section, we show that the distribution of the data $\{x_i\}_{i=1}^I$ can be approximated by a probability measure of the form  $\widehat{Q}_N=\sum_{j=1}^N \theta_j \delta_{\mu_j}$, a convex combination of the point masses $\delta_{\mu_j}$. We propose the \textit{high dimensional mixture density estimation} (HDMDE) algorithm for the estimation of $\mu_j$, $\theta_j$, and $N$ in $\widehat{Q}_N$.

\subsection{Motivation and Estimation of Mixture Density Parameters}\label{section: Estimation of Mixture Density Parameters}

In manifold learning, we assume that the $D$-dimensional data $\{x_i\}_{i=1}^I$ are realizations from a $d$-dimensional latent manifold, corrupted by $D$-dimensional noise. Each $x_i$ is generated in two stages - the latent data stage and the noise corruption stage. In the latent data stage, a latent random $D$-vector $T$ is generated from a probability measure $Q^\star$, where $Q^\star$ is supported on a $d$-dimensional manifold. Then in the noise corruption stage, given $T=t$, the data point $x_i$ is generated from a probability density function (PDF) $\psi(\cdot-t)$ with $\int x\psi(x)dx=\pmb{0}\in\R^D$. Then the distribution generating $x_i$ is the PDF $p(x)=\psi* Q^\star(x)=\int \psi(x-t)Q^\star(dt)$, where $*$ denotes the convolution operation. We may estimate the latent probability measure $Q^\star$ by maximizing the nonparametric likelihood $\mathcal{L}(Q)=\prod_{i=1}^I \psi*Q(x_i)$ in $Q\in\mathscr{Q}$, where $\mathscr{Q}$ denotes the collection of probability measures supported on $d$-dimensional manifolds. Theorem 3.1 of \cite{lindsay1983geometry} implies that there exists a unique probability measure of the form $\widehat{Q}_N=\sum_{j=1}^N\theta_j\delta_{\mu_j}$ with $N\le I$ achieving $\sup_{Q\in\mathscr{Q}}\mathcal{L}(Q)$. For example, in Figure \ref{flow chart} (d), the gray dots denote data $x_i$, the red curve denotes the support of the latent variable $T$ (the probability measure $Q^\star$), and the large orange dots illustrate the point masses $\delta_{\mu_j}$ in $\widehat{Q}_N$.

Substituting the true $Q^\star$ with the maximizer $\widehat{Q}_N$, we estimate the distribution generating $x_i$ by the PDF $\psi*\widehat{Q}_N(x)=\sum_{j=1}^N\theta_j\psi(x-\mu_j)$. To estimate $\mu_j$, $\theta_j$, and $N$, we use a mixture density estimation approach. For any $\sigma>0$, denote $\psi_{\sigma}(x)=\frac{1}{\sigma^D}\psi\left(\frac{x}{\sigma}\right)$. For any fixed positive integer $N$, let $\{\mu_{j,N}\}_{j=1}^N$ be a collection of points in $\R^D$ depending on $N$, let $\theta_N=(\theta_{1,N},\theta_{2,N},\cdots,\theta_{N,N})^T$ be in the probability simplex $\Theta_N=\{\theta_N:\theta_{j,N}\ge 0,\sum_{j=1}^N\theta_{j,N}=1\}$, and let $\sigma_N$ be a positive number such that $\lim_{N\rightarrow\infty}\sigma_N=0$. We construct the following mixture density
\begin{align}\label{approximating density}
	p_N(x\vert\mathbf{\theta}_N)=\psi_{\sigma_N}*\widehat{Q}_N(x)=\sum_{j=1}^N\theta_{j,N}\psi_{\sigma_N}\left(x-\mu_{j,N}\right),\ \ \ \mbox{where }\widehat{Q}_N=\sum_{j=1}^N \theta_{j,N}\delta_{\mu_{j,N}}.
	\end{align}
We estimate $\mu_{j,N}$, $\theta_N$, $\sigma_N$, and $N$ such that $p_N(x\vert\mathbf{\theta}_N)$ approximates the true PDF $p(x)$ generating data $\{x_i\}_{i=1}^I$.

Assuming that the number of mixture components $N$ is fixed, various approaches may be implemented for the estimation of mixture parameters  $\mu_{j,N}$, $\theta_{j,N}$, and $\sigma_N$. A common approach for estimating these parameters is based on the EM algorithm (\cite{dempster1977maximum}). However, this approach can be too computationally intensive in our setting. We propose a high-dimensional generalization of the mixture density estimation algorithm proposed by \cite{eloyan2011smooth}, where the estimation of $\mu_{j,N}$ and $\sigma_N$ is performed in a computationally efficient manner for a given $N$ and the estimation of the mixture weights $\theta_{j,N}$ is then conducted using the EM algorithm. \\
\textbf{Estimation of $\mu_{j,N}$:} Partition $\{x_i\}_{i=1}^I$ to $N$ clusters by \textit{k-means clustering}. Let $\{\mu_{j,N}\}_{j=1}^N$ be the centers of the $N$ clusters. \\
\textbf{Estimation of $\sigma_N$:} Let $\{x_{j,l}\}_{l=1}^{L_j}$ be $x_i$ in $j^{th}$ cluster. We estimate $\sigma_N$ by $$\widehat{\sigma}_N=\left\{\frac{1}{D\times N}\sum_{j=1}^N\left(\frac{1}{L_j}\sum_{l=1}^{L_j}\left\Vert x_{j,l}-\mu_{j,N}\right\Vert_{\R^D}^2\right)\right\}^{1/2}.$$ If $\left\{x_{j,l}\right\}_{l=1}^{L_j}$ are iid $N_D(\mu_{j,N},\sigma_N^2 I_{D\times D})$ for $j=1,2,\cdots,N$, then $\widehat{\sigma}_N^2$ is an unbiased estimator of $\sigma_N^2$.\\
\textbf{Estimation of $\theta_{j,N}$:} Assuming that $\{x_i\}_{i=1}^I$ is a random sample from the PDF $$p_N(x\vert\theta_N)=\sum_{j=1}^N\theta_{j,N}\psi_{\sigma_N}\left(x-\mu_{j,N}\right)\approx p(x),$$ we estimate $\theta_{j,N}$ by likelihood maximization. In practice, the sample mean is used as an unbiased estimate of $\mathbb{E}_{p_N}(X\vert\theta_N)$. Therefore, we apply the constraint $\int_{\R^D} x p_N(x\vert\theta_N)dx  = \overline{x}=\frac{1}{I}\sum_{i=1}^I x_i$ and estimate $\theta_{j,N}$ achieving the likelihood maximization using a constrained EM algorithm. The detailed derivation of this procedure is in Appendix. The proposed approach results in the following equation for updating the estimate of $\theta_{j,N}$.
\begin{align}\label{EM iteration 3}
& \theta^{(k)}_{j,N}=\frac{\sum_{i=1}^I w_{ij}\left(\theta_N^{(k-1)}\right)}{\hat{\rho}_1+\hat{\rho}_2^T\mu_{j,N}},\ \ \ j=1,2,\cdots,N \mbox{ and }k=1,2, \cdots, \mbox{ where}\\
\notag & \left(\widehat{\rho}_1,\widehat{\rho}_2\right)=\arg\min_{\rho_1\in\mathbb{R},\rho_2\in\mathbb{R}^D} \left\{\left\vert\sum_{j=1}^N \left(\frac{\sum_{i=1}^Iw_{ij}\left(\theta_N^{(k-1)}\right)}{\rho_1+\rho_2^T\mu_{j,N}} \right)-1\right\vert^2+ \left\Vert\sum_{j=1}^N \left(\frac{\sum_{i=1}^I w_{ij}\left(\theta_N^{(k-1)}\right)}{\rho_1+\rho_2^T\mu_{j,N}} \right)\mu_{j,N}-\overline{x}\right\Vert_{\mathbb{R}^D}^2 \right\},\\
\notag & w_{ij}\left(\theta_N\right)=\frac{\theta_{j,N}\times\psi_{\widehat{\sigma}_N}\left(x_i-\mu_{j, N}\right)}{\sum_{j'=1}^N\theta_{j',N}\times\psi_{\widehat{\sigma}_N}\left(x_i-\mu_{j', N}\right)}.
\end{align}
If $\sup\left\{\vert \theta_{j,N}^{(k^*)}-\theta_{j,N}^{(k^*-1)}\vert:j=1,2,\cdots,N \right\}$ is smaller than a predetermined threshold, then the iteration (\ref{EM iteration 3}) stops, and $\theta_N^{(k^*)}=(\theta_{1,N}^{(k^*)}, \theta_{2,N}^{(k^*)}, \cdots,  \theta_{N,N}^{(k^*)})^T$ is an estimate of $\theta_N$.

\begin{figure}[ht]
	\begin{center}
	    \includegraphics[scale=0.31]{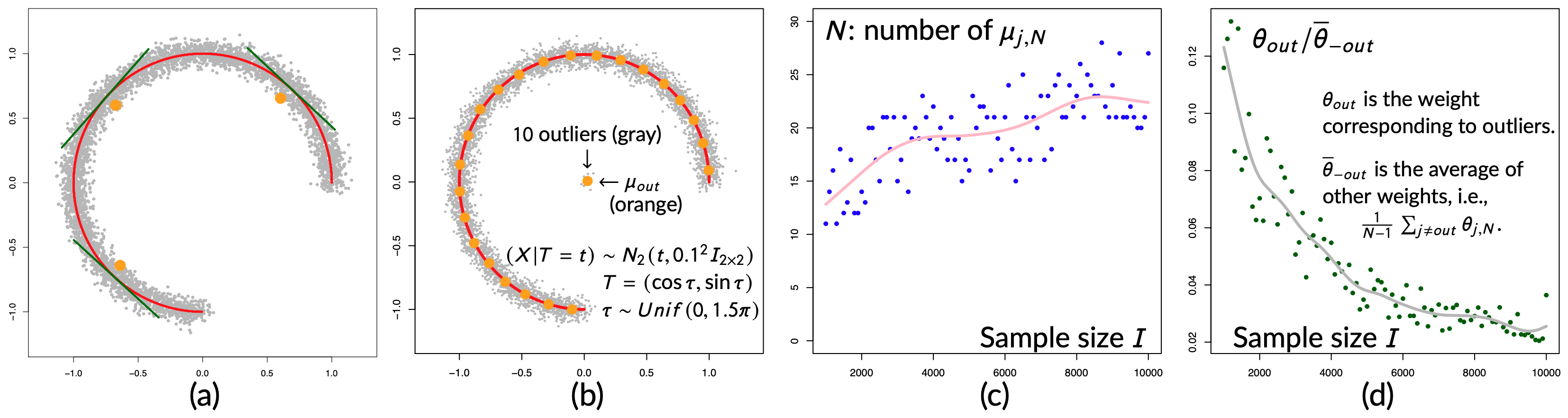}
	\end{center}
	\caption{(a) The red $3/4$ circle is the underlying manifold of interest, the data cloud (gray) is generated from the $X$ in (b), applying $k$-means clustering with $N=3$ results in three large centers (orange); the 3 centers lie completely inside the circle and do not give a good representation of the red circle. This issue stems from the nonzero curvature: For each tangent (green) of the circle, data are unevenly distributed on its two sides. Cluster centers are ``dragged" to the side with more data points. This issue disappears when $N$ is sufficiently large so that $\Vert p_N(\cdot\vert\theta_N)-p\Vert_{L^1(\R^D)}\approx0$. (b) Small dots (gray) are random samples from $X$. The support of $T$ is the solid curve (red). We apply Algorithm \ref{HDMDE algorithm} to these gray dots with input $N_0=10, \alpha=0.05, \epsilon=0.001$. The large dots (orange) denote the estimated $\mu_{j,N}$ in the $\widehat{Q}_N=\sum_{j=1}^N\theta_{j,N}\delta_{\mu_{j,N}}$. (c) Set $N_0=10$, for random samples with size $I$ ranging from $1000$ to $10000$, the estimated $N$ are shown by dots (blue). The curve (pink) shows the trend of $N$ as the sample size $I$ increases. (d) Illustration of the influence of outliers on $\widehat{Q}_N$ as $I$ increases. For each $I$, the influence of outliers is measured by the quantity $\theta_{out}/\overline{\theta}_{-out}$ and shown by a dot (green). The gray curve shows the corresponding trend.}\label{fig:skeleton}
\end{figure}

\subsection{Estimation of the Number of Mixture Components}\label{section: Estimation of the Number of Mixture Components}  

In this subsection, we propose an iterative hypothesis testing procedure to choose the number of mixture components $N$. If $N$ is too small, $\widehat{Q}_N$ in (\ref{approximating density}) may not capture geometric features of data $x_i$. A detailed example of the ``small $N$" issue is presented in Figure \ref{fig:skeleton} (a). On the other hand, an unreasonably large $N$ may result in computational burden and redundant model complexity. Motivated by the following theorem, we choose $N$ by investigating the $L^1$-distance between $p_N(\cdot\vert\theta_N)=\psi*\widehat{Q}_N$ in (\ref{approximating density}) and the PDF $p$ generating data $x_i$. The diameter of a set $U$ in $\R^D$ is defined by $diam(U)=\sup\{\Vert x_1-x_2\Vert_{\R^D}:x_1,x_2\in U\}$.
\begin{theorem}\label{density estimation}
 Suppose $p$ is a PDF with compact support $\supp (p)=\overline{\{ x:p(x)\ne0\}}$ and $p\in L^q(\R^D)$ for some $1\le q<\infty$, $\mathcal{M}_N=\{\mu_{j,N}\}_{j=1}^N\subset \supp (p)$, and $d_N, \sigma_N>0$ for all positive integers $N$. If (i) the triplet $(d_N,\sigma_N,\psi)$ satisfies
 \begin{align}\label{triplet condition}
     \lim_{N\rightarrow\infty}\left(\sup\left\{\left\Vert\psi_{\sigma_N}(\cdot-y)-\psi_{\sigma_N}\right\Vert_{L^q(\R^D)}:\Vert y\Vert_{\R^D}\le d_N\right\}\right)=\lim_{N\rightarrow \infty}\sigma_N=\lim_{N\rightarrow \infty}d_N=0;
 \end{align}
 (ii) there exists a partition of the compact set $\supp(p)$, i.e., $\supp(p)=\bigcup_{j=1}^N A_{j,N}$ with $A_{i,N}\bigcap A_{j,N}=\emptyset$ when $i\ne j$, such that $A_{j,N}\bigcap \mathcal{M}_{N}=\{\mu_{j,N}\}$ and $\sup\{diam(A_{j,N}):j=1,2,\cdots,N\}\le d_N$ for all large positive integers $N$; then there exists a sequence $\{\theta_N\}_N$ with $\theta_N\in\Theta_N$ such that $\lim_{N\rightarrow\infty}\left\Vert p_N\left(\cdot\vert\theta_N\right)-p\right\Vert_{L^q(\R^D)}=0,$ where $p_N(\cdot\vert\theta_N)$ is defined by (\ref{approximating density}).
\end{theorem}
\noindent The proof of Theorem \ref{density estimation} is in Appendix. In applications, observed data are always in a bounded domain. Thus the assumption on $p$ is not restrictive. The triplet satisfying (\ref{triplet condition}) exists,  e.g., $\psi$ is any PDF, $\sigma_N=N^{-\alpha_1}$, and $d_N=N^{-(\alpha_1+\alpha_2)}$, then this triplet satisfies (\ref{triplet condition}) when $q=1$, where $\alpha_1$ and $\alpha_2$ are allowed to be any positive numbers. Herein, we set $\psi$ to be the standard Gaussian kernel $(2\pi)^{-D/2}\exp\{-\Vert x\Vert_{\R^D}^2/2\}$. Condition (ii) essentially requires $\mathcal{M}_N$ to be dense in $\supp (p)$ as $N\rightarrow\infty$. Since we estimate the knots $\mu_{j,N}$ as centers of the $N$ k-means clusters of the data $\{x_i\}_{i=1}^I \sim_{iid} p$, $\mathcal{M}_N=\{\mu_{j,N}\}_{j=1}^N$ tends to be dense in $\supp (p)$ as the number of clusters increases. Therefore, condition (ii) is realistic. Since all PDFs are in $L^1(\R^D)$, we are only interested in the special case of Theorem \ref{density estimation} where $q=1$.

The limit $\lim_{N\rightarrow\infty}\Vert p_N(\cdot\vert\theta_N)-p\Vert_{L^1(\R^D)}=0$ implies $\lim_{N\rightarrow\infty}\Vert p_{N+1}(\cdot\vert\theta_{N+1})- p_{N}(\cdot\vert\theta_{N})\Vert_{L^1(\R^D)}=0.$ If we further assume $p\in L^\infty(\R^D)$, then we have the following limit motivating the proposed method of selecting $N$.
\begin{align*}
  \left\vert\mathbb{E}_p\left\{p_{N+1}\left(X\vert\theta_{N+1}\right)-p_N\left(X\vert\theta_N\right)\right\}\right\vert \le \Vert p\Vert_{L^\infty(\R^D)}\left\Vert p_{N+1}(\cdot\vert\theta_{N+1})- p_{N}(\cdot\vert\theta_{N})\right\Vert_{L^1(\R^D)}\rightarrow0,\ \ \ \ \mbox{as }N\rightarrow\infty,  
\end{align*}
where $X\sim p$. This limit implies $\mathbb{E}_p\left\{p_{N+1}\left(X\vert\theta_{N+1}\right)-p_N\left(X\vert\theta_N\right)\right\}\approx0$ when $N$ is sufficiently large. Therefore, we choose a sufficiently large $N$ by testing the following hypothesis.
\begin{equation}\label{eq:hypothesis}
H_0: \mathbb{E}_p\left\{p_{N+1}(X\vert\theta_{N+1})-p_N(X\vert\theta_{N})\right\} = 0\ \ \ \ \ \mbox{ vs }\ \ \ \ \ H_a: \mathbb{E}_p\left\{p_{N+1}(X\vert\theta_{N+1})-p_N(X\vert\theta_{N})\right\} \neq 0,
\end{equation}
where $\mathbb{E}_p$ is the expectation associated with the PDF $p$. Since $p$ and $\theta_N$ are unknown, we use $\overline{\Delta}_{I,N}=\frac{1}{I}\sum_{i=1}^I \widehat{\Delta}_i$ to test the hypothesis (\ref{eq:hypothesis}), where $\widehat{\theta}_N=\widehat{\theta}_N\left(X_1,X_2,\cdots,X_I\right)$ is an estimator of $\theta_N$ computed from the iid sample $X_i$ and $\widehat{\Delta}_i=p_{N+1}(X_i\vert\widehat{\theta}_{N+1})-p_N(X_i\vert\widehat{\theta}_{N})$. We apply the following theorem to perform an asymptotic test of (\ref{eq:hypothesis}).
\begin{theorem}\label{HT theorem}
	Suppose $\widehat{\theta}_n$ is an estimator of the true $\theta_n\in\Theta_n$, such that $\widehat{\theta}_n=\theta_n+o_p(I^{-1/2})$, where $n\in\{N,N+1\}$, $N$ is fixed, and $\psi\in L^\infty(\R^D)$. Denote $\Delta_i=p_{N+1}(X_i\vert\theta_{N+1})-p_N(X_i\vert\theta_{N})$, $\mu_{\Delta,N}=\mathbb{E}_p\Delta_1$, $\widehat{S}^2_{I,N}=\frac{1}{I}\sum_{i=1}^I\widehat{\Delta}_i^2-( \overline{\Delta}_{I,N})^2$ and $\widehat{S}_{I,N}=\sqrt{\widehat{S}^2_{I,N}}$. Then $\sqrt{I}\frac{\overline{\Delta}_{I,N}-\mu_{\Delta,N}}{\widehat{S}_{I,N}}\rightarrow N(0,1)$ in distribution as $I\rightarrow\infty$.
\end{theorem}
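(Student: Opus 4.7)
The plan is to reduce the claim to the classical Central Limit Theorem applied to the \emph{oracle} statistic $\overline{\Delta}^o_{I,N} := \frac{1}{I}\sum_{i=1}^I \Delta_i$ built from the true $\theta_N,\theta_{N+1}$, and then to show that replacing $\theta_n$ by $\hat{\theta}_n$ introduces only $o_p(I^{-1/2})$ error in the numerator and (via Slutsky's theorem) in the studentizing denominator. The structural observation that makes this work is that $p_n(x\vert\theta_n)=\sum_{j=1}^n \theta_{j,n}\psi_{\sigma_n}(x-\mu_{j,n})$ is \emph{linear} in $\theta_n$, so plug-in errors on $\theta$ convert directly into uniform-in-$i$ errors on $\hat{\Delta}_i-\Delta_i$.

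First, I would observe that since $\psi\in L^\infty(\R^D)$ and $N$ is fixed (so $\sigma_N,\sigma_{N+1}$ are fixed positive constants), each kernel $\psi_{\sigma_n}$ is uniformly bounded by $\Vert\psi\Vert_{L^\infty}/\sigma_n^D$. Because the mixture weights $\theta_{j,n}$ sum to $1$, both $p_N(\cdot\vert\theta_N)$ and $p_{N+1}(\cdot\vert\theta_{N+1})$ are uniformly bounded, so $|\Delta_i|\le B$ almost surely for some finite $B$ depending only on $N$, $\sigma_N,\sigma_{N+1}$, and $\Vert\psi\Vert_{L^\infty}$. In particular $\sigma_\Delta^2:=\mathrm{Var}(\Delta_1)<\infty$, and, assuming the nondegenerate case $\sigma_\Delta>0$, the classical i.i.d.\ CLT yields $\sqrt{I}\,(\overline{\Delta}^o_{I,N}-\mu_{\Delta,N})/\sigma_\Delta \to N(0,1)$ in distribution.

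Next, I would control the plug-in error. By linearity in $\theta$,
\[
\hat{\Delta}_i-\Delta_i = \sum_{j=1}^{N+1}(\hat{\theta}_{j,N+1}-\theta_{j,N+1})\,\psi_{\sigma_{N+1}}(X_i-\mu_{j,N+1}) - \sum_{j=1}^{N}(\hat{\theta}_{j,N}-\theta_{j,N})\,\psi_{\sigma_N}(X_i-\mu_{j,N}).
\]
Since each kernel value is bounded by $\Vert\psi\Vert_{L^\infty}/\sigma_n^D$ uniformly in $X_i$, the right-hand side is bounded by $C(\Vert\hat{\theta}_{N+1}-\theta_{N+1}\Vert_\infty+\Vert\hat{\theta}_N-\theta_N\Vert_\infty)$ for a constant $C$ that does not depend on $i$. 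Using the hypothesis $\hat{\theta}_n=\theta_n+o_p(I^{-1/2})$ (a finite-dimensional statement, so all norms are equivalent), this gives $\sup_{1\le i\le I}|\hat{\Delta}_i-\Delta_i|=o_p(I^{-1/2})$, hence $\sqrt{I}\,(\overline{\Delta}_{I,N}-\overline{\Delta}^o_{I,N})=o_p(1)$. The same uniform bound, combined with $|\hat{\Delta}_i|,|\Delta_i|\le B$ and the identity $\hat{\Delta}_i^2-\Delta_i^2=(\hat{\Delta}_i-\Delta_i)(\hat{\Delta}_i+\Delta_i)$, yields $\frac{1}{I}\sum_i\hat{\Delta}_i^2-\frac{1}{I}\sum_i\Delta_i^2=o_p(1)$. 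Applying the SLLN to the bounded i.i.d.\ sequences $\{\Delta_i\}$ and $\{\Delta_i^2\}$, we obtain $\hat{S}^2_{I,N}\to\mathbb{E}\Delta_1^2-\mu_{\Delta,N}^2=\sigma_\Delta^2$ in probability.

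Combining these three facts by Slutsky's theorem gives $\sqrt{I}\,(\overline{\Delta}_{I,N}-\mu_{\Delta,N})/\hat{S}_{I,N}\to N(0,1)$. I expect the only real obstacle to be the bookkeeping that converts the componentwise $o_p(I^{-1/2})$ rate on $\hat{\theta}_n-\theta_n$ into a rate that is uniform in the sample index $i$; this is precisely where $\psi\in L^\infty$ and the fact that $N$ is held fixed (so $\sigma_N,\sigma_{N+1}$ do not shrink) are essential. Once uniform boundedness of $\hat{\Delta}_i$ and $\Delta_i$ is secured, the passage through both the sample mean and the sample second moment is routine.
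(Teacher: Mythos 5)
Your proposal is correct and fills in, with careful bookkeeping, exactly the CLT-plus-Slutsky argument that the paper gestures at and then omits. The linearity-in-$\theta$ observation and the resulting $\sup_i|\hat{\Delta}_i-\Delta_i|=o_p(I^{-1/2})$ bound (using $\psi\in L^\infty$ and fixed $N$, hence fixed $\sigma_N,\sigma_{N+1}$) is the right way to make that two-line claim rigorous, and your explicit flag of the nondegeneracy assumption $\sigma_\Delta>0$, which the paper leaves implicit, is a genuine improvement.
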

\noindent Theorem \ref{HT theorem} can be derived directly from the central limit theorem and Slutsky's theorem, hence its proof is omitted. Since we are interested in testing the hypothesis $H_0: \mu_{\Delta,N}=0$ as shown in (\ref{eq:hypothesis}), we define the statistic $Z_{I,N} = \sqrt{I}\frac{\overline{\Delta}_{I,N}}{\widehat{S}_{I,N}}$. From Theorem \ref{HT theorem}, under $H_0$, we have $Z_{I,N} \sim N(0,1)$ approximately when $I$ is large. We choose $N$ by $$N=N_\alpha=\inf\left\{N:\vert Z_{I,N}\vert<z_{1-\alpha/2}\mbox{ and }N_0\le N\le I-1\right\},$$ 
where $N_0$ denotes a predetermined lower bound for $N$, $z_{1-\alpha/2}$ is the $1-\alpha/2$ quantile of $N(0,1)$, and $\alpha = 0.05$ is chosen for testing (\ref{eq:hypothesis}). Figure \ref{fig:skeleton} (c) shows that the estimated $N$ tends to be much smaller than $I$. In both the method proposed above and the counterpart in \cite{eloyan2011smooth}, the number of mixture components $N$ is chosen by measuring the dissimilarity between $p_{N+1}(\cdot\vert\theta_{N+1})$ and $p_{N}(\cdot\vert\theta_{N})$. \cite{eloyan2011smooth} applies \textit{Kullback-Leibler divergence} (KLD) while we apply $L^1$-norm. We choose $L^1$-norm because we found in simulations that $L^1$-norm captures the geometric features of $p$ better than KLD.

\subsection{The High-Dimensional Mixture Density Estimation (HDMDE) Algorithm}\label{section: HDMDE Algorithm} 

By iteratively combining the procedures for the estimation of mixture weights $\theta_{j,N}$, mixture element means $\mu_{j,N}$, and the number of mixture components $N$ presented in Sections \ref{section: Estimation of Mixture Density Parameters} and \ref{section: Estimation of the Number of Mixture Components}, we propose the HDMDE algorithm in Algorithm 1 to estimate $\widehat{Q}_N=\sum_{j=1}^N\theta_{j,N}\delta_{\mu_{j,N}}$. In this Section, we use simulation studies to illustrate that HDMDE results in a substantial reduction in computation speed and elimination of the effects of outliers. In addition, we present a proof-of-concept simulation study showing that the density estimated by HDMDE can approximate the true density better than the kernel density estimate (KDE) in terms of minimizing the $L^1$-distance.
\begin{algorithm}[ht]
	\caption{HDMDE}\label{HDMDE algorithm}
	\begin{algorithmic}[1]
		\INPUT (i) Data points $\{x_i\}_{i=1}^I$ in $\R^D$, (ii)  a positive integer $N_0<I-1$, and (iii) $\epsilon, \alpha\in(0,1)$.
		\OUTPUT $N$, $\{\mu_{j,N}\}_{j=1}^{N}$, $\widehat{\theta}_N=(\widehat{\theta}_{1,N},\cdots \widehat{\theta}_{N,N})^T$, and $\sigma_N$. Then we have 
		$$ \widehat{Q}_N=\sum_{j=1}^N\widehat{\theta}_{j,N}\delta_{j,N} \mbox{ and } p_N\left(\cdot\vert\widehat{\theta}_N\right)=\psi_{\sigma_N}*\widehat{Q}_N.$$
		\STATE  $N\leftarrow N_0$ and formally $Z_{I,N}\leftarrow2\times z_{1-\alpha/2}$.
		\STATE\label{mu and sigma step}  Estimate $\mu_{j,N}$ and $\sigma_N$ using the k-means clustering.
		\STATE\label{theta step} Apply the iteration (\ref{EM iteration 3}) and get a sequence $\{\widehat{\theta}_{N}^{(k)}\}_k$. Set $\widehat{\theta}_{N}=\widehat{\theta}_{N}^{(k^*)}$ with 
		$$k^*=\min \left\{k:\sup_j\left\vert\widehat{\theta}_{j,N}^{(k-1)}-\widehat{\theta}_{j,N}^{(k)}\right\vert<\epsilon \right\}.$$
		\STATE\label{hypothesis testing step} Compute $p_N(x_i\vert\widehat{\theta}_{N})=\sum_{j=1}^N \widehat{\theta}_{j,N}\times\psi_{\sigma_N}(x_i-\mu_{j,N})$ for all $i$.
		\WHILE{$\left\vert Z_{I,N}\right\vert\ge z_{1-\alpha/2}$ and $N< I-1$,}
		\STATE $N\leftarrow N+1$, repeat the steps \ref{mu and sigma step}, \ref{theta step}, \ref{hypothesis testing step}, and compute $Z_{I,N}$.
		\ENDWHILE 
	\end{algorithmic}
\end{algorithm}

When the sample size $I$ of data $\{x_i\}_{i=1}^I$ is large, manifold fitting can be computationally expensive. One advantage of using HDMDE in PME is the comparatively small computational burden in estimating $\widehat{f}_\lambda=\arg\min_{f}\mathcal{K}_{\lambda, \widehat{Q}_N}(f)$. We conduct simulation studies to compare the magnitudes of estimated $N$ and sample size $I$ empirically and show that $N$ is much smaller than $I$. Figure \ref{fig:skeleton} (b,c) provides an illustrative comparison between $N$ and $I$ by simulations. For each $I$ ranging from $1000$ to $10000$, we generate $I-10$ points close to a $3/4$ part of a unit circle as presented in Figure \ref{fig:skeleton} (b) and 10 outliers from $N_2(\mathbf{0}, 0.1^2 I_{2\times2})$. We estimate a $\widehat{Q}_N$ by HDMDE for each simulated sample. In Figure \ref{fig:skeleton} (b), we show one simulation example with $I=5000$ by gray points and the estimated $\{\mu_{j,N}\}_{j=1}^N$ by large orange dots. Figure \ref{fig:skeleton} (c) illustrates the estimated $N$ versus $I$ and shows that $N$ are much smaller than $I$.

Another advantage of HDMDE is that the effect of outliers on $\widehat{Q}_N$ is negligible. As a result, when we fit a manifold by $\widehat{f}_\lambda=\arg\min_f\mathcal{K}_{\lambda, \widehat{Q}_N}(f)$, the result is robust to outliers. Specifically, if the node $\mu_{j',N}$ is closer to outliers than to the main part of the data cloud, the associated weight $\theta_{j',N}$ is small. In each of the simulations in Figure \ref{fig:skeleton}, only one node defined as $\mu_{out}$ is located in the outlier cluster, i.e., in $\{x: \Vert x\Vert_{\R^2}<0.3\}$. We denote the weight associated with $\mu_{out}$ by $\theta_{out}$, and denote the average of other weights $\frac{1}{N-1}\sum_{j\ne out}\theta_{j,N}$ by $\overline{\theta}_{-out}$. The ratio $\theta_{out}/\overline{\theta}_{-out}$ measures the influence of $\mu_{out}$ compared to that of other $\mu_{j,N}$. The lower this ratio, the more negligible the effect of outliers on estimation of $\widehat{Q}_N$. Figure \ref{fig:skeleton} (d) shows that $\theta_{out}/\overline{\theta}_{-out}$ is small and decreases drastically as the sample size $I$ increases. Hence, the point $\mu_{out}$ representing $10$ outliers has a negligible effect on $\widehat{Q}_N$ and this effect decreases as $I$ increases. 

An important property of HDMDE is its performance in approximating the true PDF $p$ in terms of minimizing the $L^1$-distance. We compare HDMDE with KDE in terms of approximating $p$ and then use this property of HDMDE to justify the reduction from $\mathcal{K}_{\lambda, p}(f)$ to $\mathcal{K}_{\lambda, \widehat{Q}_N}(f)$ in PME. To apply KDE in a simulation example, we implement the \texttt{R} function \texttt{kde} in the package \texttt{ks} using default parameter values provided in the package. Let $p_{hdmde}(\cdot\vert \pmb{X})$ and $p_{kde}(\cdot\vert \pmb{X})$ denote the PDFs estimated by applying HDMDE and KDE, respectively, to data $\pmb{X}=\{X_i\}_{i=1}^I\sim_{iid}p$. The difference between the performances of HDMDE and KDE is measured by $\Vert p_{kde}(\cdot\vert \pmb{X})-p \Vert_{L^1(\R^D)}-\Vert p_{hdmde}(\cdot\vert \pmb{X})-p \Vert_{L^1(\R^D)}=:\mathcal{J}(\pmb{X})$. Let $p$ be the PDF of the random vector $X$ in Figure \ref{fig:skeleton} (b) (without the $10$ outliers). Using this PDF $p$ as an example, we generate $500$ realizations of $\pmb{X}$ from $p$ with $I=1000$ and estimate the mean $\mathbb{E}\mathcal{J}(\pmb{X})$ and variance $\mathbb{V}\mathcal{J}(\pmb{X})$  using the sample mean and sample variance. Then we compute the Wald $95\%$-confidence interval $\mathbb{E}\mathcal{J}(\pmb{X})\pm1.96\sqrt{\mathbb{V}\mathcal{J}(\pmb{X})}\approx(0.018, 0.130)$. This interval shows that, on average, HDMDE performs better than KDE in the $L^1$-approximation of the density $p$ in this example. Using this property of HDMDE, we provide the following result showing that minimizing $\mathcal{K}_{\lambda, p}(f)$ is approximately equivalent to minimizing $\mathcal{K}_{\lambda, \widehat{Q}_N}(f)$.
\begin{theorem}\label{thm: penalized MSD approximation}
Suppose (i) $p$ and $\psi$ are PDFs with bounded support; (ii) $\{\widehat{Q}_N=\sum_{j=1}^N\theta_{j,N}\delta_{\mu_{j,N}}\}_{N=1}^\infty$ satisfies $\{\mu_{j,N}\}_{j=1}^N\subset \supp (p)$ for all $N$, and $\lim_{N\rightarrow\infty}\Vert\psi_
{\sigma_N}*\widehat{Q}_N - p\Vert_{L^1(\R^d)}=0$ for a sequence $\{\sigma_N\}_{N=1}^\infty$ with $\lim_{N\rightarrow\infty}\sigma_N=0$; (iii) $f\in C_\infty\bigcap \nabla^{-\otimes 2}L^2$ is a homeomorphism and has no ambiguity point in a neighborhood of $\supp (p)$. If there exists $\{\mu_j\}_{j=1}^\infty\subset\R^D$ so that $\lim_{N\rightarrow\infty}\mu_{j,N}=\mu_j$ and $\sum_{j=1}^\infty(\sup_{N': N'\ge j}\theta_{j,N'})<\infty$, we have the limit $\lim_{N\rightarrow\infty}\mathcal{K}_{\lambda, \widehat{Q}_N}(f)=\mathcal{K}_{\lambda, p}(f)$ for $\lambda\in[0,\infty]$.
\end{theorem}
\noindent The proof of Theorem \ref{thm: penalized MSD approximation} is in Appendix. Although the Gaussian kernel $\psi$ does not have a bounded support, most of its mass is in a bounded domain, e.g., the Gaussian kernel in $\R^3$ satisfies $\psi(x)\le10^{-22}$ when $\Vert x\Vert_{\R^3}\ge10$. In Theorem \ref{thm: penalized MSD approximation}, condition (ii) can be implied by Theorem \ref{density estimation}, and condition (iii) is related to Theorem \ref{measurability of projection indices}.

\section{Step 2: Fitting, Step 3: Tuning, Model Complexity Selection}\label{explicit formula}

In this Section, we propose the details of Step 2 (fitting) and Step 3 (tuning) of our proposed PME algorithm illustrated in Figure \ref{flow chart}. To fit $\widehat{f}_\lambda=\arg\min_{f}\mathcal{K}_{\lambda,\widehat{Q}_N}(f)$ in Step 2 of the PME algorithm, we apply the iteration (\ref{main iteration}) with $\mathbb{P}=\widehat{Q}_N$, i.e., $f_{(n+1)}=\arg\min_{f}\left\{\mathcal{K}_{\lambda, \widehat{Q}_N}(f, f_{(n)}): f=(f_1, f_2, \cdots, f_D)^T\in C_\infty\bigcap\nabla^{-\otimes 2}L^2(\R^d\rightarrow\R^D)\right\}$ with
\begin{align}\label{wowowow}
\mathcal{K}_{\lambda, \widehat{Q}_N}(f, f_{(n)})=\sum_{l=1}^D\left\{\sum_{j=1}^N\theta_{j,N}\left\vert\mu_{j,N,l}-f_{l}\left(\pi_{f_{(n)}}\left(\mu_{j,N}\right)\right)\right\vert^2+\lambda\left\Vert \nabla^{\otimes 2} f_{l}\right\Vert^2_{L^2(\R^d)}\right\},
\end{align} 
where $\mu_{j,N,l}$ is the $l^{th}$ component of the $D$-vector $\mu_{j,N}$, and $l$ denotes a vector component index. Define the following notations: (i) If $\nu$ is an even integer, $\eta_\nu(t)=\Vert t\Vert_{\R^d}^\nu\log\left(\Vert t\Vert_{\R^d}\right)$ when $\Vert t\Vert_{\R^d}\ne0$ and $\eta_\nu(t)=0$ when $\Vert t\Vert_{\R^d}=0$; otherwise, $\eta_\nu(t)=\Vert t\Vert_{\R^d}^\nu$. (ii) $Poly_1[t]$ is the linear space of polynomials on $\R^d$ with degree $\le1$ and has a linear basis $\{p_k\}_{k=1}^{d+1}$. The following theorem implies that the minimizer of $\mathcal{K}_{\lambda, \widehat{Q}_N}(\cdot, f_{(n)})$ in $C_\infty\bigcap\nabla^{-\otimes 2}L^2$ is of a spline form.  
\begin{theorem}\label{main functional theorem}
	Suppose $f_{(n)}\in C_\infty(\R^d\rightarrow\R^D)$, $d\le3$, and each polynomial in $Poly_1[t]$ is uniquely determined by its values on $\mathcal{C}=\{\pi_{f_{(n)}}(\mu_{j,N})\}_{j=1}^N$. Then a minimizer of $\mathcal{K}_{\lambda, \widehat{Q}_N}(\cdot, f_{(n)})$ within $C_\infty\bigcap\nabla^{-\otimes 2}L^2(\R^d\rightarrow\R^D)$ is of the following form.
	\begin{align}\label{analytic expression of minimizer}
	f_{(n+1), l}(t)=\sum_{j=1}^N s_{j,l}\times\eta_{4-d}\left(t-\pi_{f_{(n)}}(\mu_{j,N})\right)+\sum_{k=1}^{d+1}\alpha_{k,l}\times p_k(t),\ \ l=1,2,\cdots,D,
	\end{align}
	with the constraint $\sum_{j=1}^N s_{j,l}\times p_k\left(\pi_{f_{(n)}}(\mu_{j,N})\right)=0$ for all $k=1,2,\cdots,d+1$ and $l=1,2,\cdots,D$.
\end{theorem}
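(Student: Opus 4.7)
The plan is to reduce the problem to a classical Duchon thin-plate smoothing spline problem in the Beppo-Levi space $\nabla^{-2}L^2(\R^d)$, and then invoke a representer-style orthogonal decomposition that collapses the minimization onto the finite-dimensional subspace given by (\ref{analytic expression of minimizer}). First, the objective splits across the $D$ output coordinates:
\[
\mathcal{K}_{\lambda,P_N}(f,f_{(n)})=\sum_{l=1}^D J_l(f_l),\qquad J_l(\phi):=\sum_{j=1}^N\theta_{j,N}\bigl|\mu_{j,N,\underline l}-\phi(t_j)\bigr|^2+\lambda\,\|\nabla^2\phi\|_{L^2(\R^d)}^2,
\]
with $t_j:=\pi_{f_{(n)}}(\mu_{j,N})$. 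Because the admissible set $C_\infty\cap\nabla^{-2}L^2(\R^d\to\R^D)$ is a product over coordinates, it suffices to minimize each scalar $J_l$ over $\phi\in C_\infty(\R^d)\cap\nabla^{-2}L^2(\R^d)$; pointwise values $\phi(t_j)$ are well defined since Theorem \ref{Sobolev embedding} gives $\nabla^{-2}L^2(\R^d)\subset C(\R^d)$ for $d\le 3$.

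Next, I would identify $\nabla^{-2}L^2(\R^d)$ as a Beppo-Levi space whose seminorm $\|\nabla^2\cdot\|_{L^2}$ has null space exactly $Poly_1[t]$ (by Lemma \ref{space equivalence} combined with the $C^0$ embedding). Duchon's splines construction produces, on the quotient, a reproducing kernel whose radial part is a constant multiple of $\eta_{4-d}$ for $d\le 3$, namely the fundamental solution of $\Delta^2$ modulo affine functions. The moment conditions $\sum_j s_{j,\underline l}\,p_k(t_j)=0$ for $k=1,\dots,d+1$ are precisely what forces $\sum_j s_{j,\underline l}\eta_{4-d}(\cdot-t_j)$ to have a square-integrable Hessian at infinity (they cancel the leading polynomial growth of $\eta_{4-d}(t-t_j)$ as $\|t\|\to\infty$), and the unisolvence hypothesis on $\mathcal{C}=\{t_j\}_{j=1}^N$ makes these constraints consistent and non-degenerate.

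The core step is a representer-style decomposition. Let $V$ be the finite-dimensional subspace of $\nabla^{-2}L^2(\R^d)$ cut out by (\ref{analytic expression of minimizer}) together with its moment constraint, and equip the Beppo-Levi quotient with the semi-inner product $\langle u,v\rangle:=\int_{\R^d}\mathrm{tr}(\nabla^2u\,\nabla^2v)\,dt$. Given any admissible $\phi$, unisolvence on $\mathcal{C}$ lets me select $\phi_V\in V$ with $\phi_V(t_j)=\phi(t_j)$ for every $j$ (classical thin-plate interpolation). Two integrations by parts, together with the distributional identity $\Delta^2\eta_{4-d}=c_d\,\delta_0$, show that each generator of $V$ is Beppo-Levi-orthogonal to $\phi-\phi_V$: the $\eta_{4-d}(\cdot-t_j)$ generators pair to multiples of $(\phi-\phi_V)(t_j)=0$, while the $Poly_1[t]$ generators pair to zero since affine functions have vanishing Hessian. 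Consequently $\|\nabla^2\phi\|_{L^2}^2=\|\nabla^2\phi_V\|_{L^2}^2+\|\nabla^2(\phi-\phi_V)\|_{L^2}^2$, the data-fit term depends only on $\phi_V$, and hence $J_l(\phi)\ge J_l(\phi_V)$. A minimizer can therefore be sought inside $V$; members of $V$ lie in $C_\infty$ because for $d\le 3$ each $\eta_{4-d}$ grows unboundedly (as $|t|^3$, $\|t\|^2\log\|t\|$, or $\|t\|$) and the moment constraints prevent the pathological cancellation that would leave the sum bounded. Existence of the minimizer within $V$ then reduces to the finite-dimensional constrained quadratic system whose normal equations are standard.

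The main obstacle I expect is the distributional integration-by-parts step: $\eta_{4-d}$ alone does not lie in $\nabla^{-2}L^2(\R^d)$ (its Hessian fails to be $L^2$ at infinity), so the pairing $\langle\eta_{4-d}(\cdot-t_j),\phi-\phi_V\rangle$ must be interpreted via a truncation-and-limit argument, or by first moving to a function in $V$ where the moment constraints guarantee the boundary terms at infinity vanish. Once this is executed rigorously — using the moment conditions twice, first to place each generator's inner product into a well-defined distributional form and second to justify discarding boundary contributions at infinity — the remainder of the proof is finite-dimensional and routine.
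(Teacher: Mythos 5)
Your approach is essentially the paper's, but unpacked: where the paper cites Duchon's Theorem 4 (Lemma \ref{Duchon, 1977, Theorem 4}) as a black box --- existence, uniqueness, and seminorm-minimality of the spline interpolant of the form (\ref{analytic expression of minimizer}) --- and then argues in one sentence that the penalized minimizer over $\nabla^{-2}L^2$ must therefore have that form, you execute the orthogonal-decomposition (representer) argument that is the heart of Duchon's own proof. Interpolating $\phi$ on $\mathcal{C}$ by $\phi_V\in V$ and deriving $\|\nabla^2\phi\|_{L^2}^2=\|\nabla^2\phi_V\|_{L^2}^2+\|\nabla^2(\phi-\phi_V)\|_{L^2}^2$ via $\Delta^2\eta_{4-d}=c_d\delta_0$ is the standard bridge from interpolation to penalized smoothing; you also correctly flag the main technical obstacle, namely the truncation-and-limit justification for pairing against $\eta_{4-d}(\cdot-t_j)$, which by itself does not lie in $\nabla^{-2}L^2$. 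So you are re-deriving rather than quoting, a legitimate and more self-contained route; the paper compresses those details by leaning on Duchon and, for the resulting normal equations, on Wahba.

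One slip worth fixing: you say the moment constraints \emph{prevent} the cancellation that would leave the sum bounded. It is the other way around. The constraints $\sum_j s_{j,\underline{l}}\,p_k(t_j)=0$ are exactly what \emph{force} cancellation of the leading-order growth of $\sum_j s_{j,\underline{l}}\,\eta_{4-d}(\cdot-t_j)$; that cancellation is what makes the Hessian of this sum $L^2(\R^d)$, i.e., what puts the spline into $\nabla^{-2}L^2$ at all. After cancellation the residual growth is at most of the same degree as $Poly_1[t]$, so $C_\infty$ membership does not drop out of the explicit formula in full generality: for instance, if every component $\mu_{j,N,\underline{l}}$ were constant in $j$, the penalized minimizer is a constant map, which is not in $C_\infty$. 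The paper's proof makes the same unexamined assertion, so this is a shared gap rather than a defect of your argument alone, but the direction of your heuristic about the moment constraints is reversed.
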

\noindent Proof of Theorem \ref{main functional theorem} is in Appendix. The reason for the dimension restriction $d\le3$ is that $\nabla^{-\otimes 2}L^2(\R^d)$ is a reproducing kernel Hilbert space only if $d\le3$ (see \cite{wahba1990spline}, Chapter 2.4). For the purpose of visualization, the intrinsic dimension $d\le3$ is not restrictive. When $d=1$, (\ref{analytic expression of minimizer}) is a cubic smoothing spline. When $d=2$, (\ref{analytic expression of minimizer}) is a thin plate spline. From Theorem \ref{main functional theorem} and the calculation strategy
in Chapter 2 of \cite{wahba1990spline}, it follows that minimizing (\ref{wowowow}) in $C_\infty\bigcap\nabla^{-\otimes 2}L^2$ is equivalent to obtaining the minimizers 
$$\arg\min_{(s_l, \alpha_l)} \left\{ \left\Vert \mathbf{W}^{1/2}\left(\mu_l-\mathbf{E}s_l-\mathbf{T}\alpha_l\right) \right\Vert_{\mathbb{R}^N}^2+\lambda \left\Vert \mathbf{E}^{1/2}s_l\right\Vert_{\R^N}^2:s_l\in\R^N, \alpha_l\in \R^{d+1}\mbox{, and }\mathbf{T}^Ts_l=0 \right\}$$ for $l=1,2,\cdots,D$, where
\begin{itemize}
    \item $\mathbf{T}$ is an $N\times(d+1)$ matrix whose $(i,j)^{th}$ element is $p_j(\pi_{f_{(n)}}(\mu_{i,N}))$;
    \item $\mu_l=(\mu_{1,N, l},\mu_{2,N, l},\cdots,\mu_{N,N, l})^T$, $\alpha_l=(\alpha_{1, l},\alpha_{2, l},\cdots,\alpha_{d+1, l})^T$, $s_l=(s_{1, l},s_{2, l},\cdots,s_{N, l})^T$ for $l=1,2,\cdots,D$;
    \item $\mathbf{E}$ is an $N\times N$ matrix whose $(i,j)^{th}$ element is $\eta_{4-d}\left( \pi_{f_{(n)}}(\mu_{i,N})-\pi_{f_{(n)}}(\mu_{j,N})\right)$;
    \item $\mathbf{W}=diag(\theta_{1,N},\theta_{2,N},\cdots,\theta_{N,N})$. 
\end{itemize}
Using the Lagrange multiplier method, we can obtain these minimizers by solving the following linear equations. 
\begin{align}\label{smoothing matrix}
\begin{pmatrix}
2\mathbf{EWE}+2\lambda\mathbf{E} & 2\mathbf{EWT} & \mathbf{T}\\
2\mathbf{T}^T\mathbf{WE} & 2\mathbf{T}^T\mathbf{WT} & \mathbf{0}\\
\mathbf{T}^T & \mathbf{0} & \mathbf{0}\\
\end{pmatrix}
\begin{pmatrix}
s_l\\
\alpha_l\\
m_l\\
\end{pmatrix}
=
\begin{pmatrix}
2\mathbf{EW}\mu_l\\
2\mathbf{T}^T\mathbf{W}\mu_l\\
\mathbf{0}\\
\end{pmatrix}  
,\ \ l=1,2,\cdots,D,
\end{align}
where $m_l$ are Lagrange multipliers. The coefficient matrix in (\ref{smoothing matrix}) is symmetric, has many zero elements, and of order $N+2d+2$. Since $N$ is moderate in most applications (see Figure \ref{fig:skeleton} (c)), solving (\ref{smoothing matrix}) is not computationally expensive.  

As detailed in (\ref{wowowow}), we use $\widehat{Q}_{N}$ to estimate $\widehat{f}_\lambda$ for each $\lambda>0$. This procedure shrinks the collection of candidate functions from $C_\infty\bigcap\nabla^{-\otimes2}L^2$ to the one-parameter family $\{\widehat{f}_\lambda\}_{\lambda>0}$. Theorem \ref{bending energy upper bound} shows that this approach prevents curvature singularity at $\lambda=0$. Then we choose an optimal element $\widehat{f}_{\lambda^*}$ in $\{\widehat{f}_\lambda\}_{\lambda>0}$ by using the observed data $\{x_i\}_{i=1}^I$ to tune the family $\{\widehat{f}_\lambda\}_{\lambda>0}$. Specifically, we choose  $\widehat{f}_{\lambda^*}$ which minimizes the MSD $\mathcal{D}(\widehat{f}_\lambda)$ associated with $\{x_i\}_{i=1}^I$, i.e.,
\begin{align}\label{eq: optimal tuning parameter}
\lambda^*=\arg\min_{\lambda>0}\left\{ \mathcal{D}(\widehat{f}_\lambda) \right\}, \mbox{ where }\mathcal{D}(\widehat{f}_\lambda)= \frac{1}{I}\sum_{i=1}^I \left\Vert x_i - \widehat{f}_\lambda\left(\pi_{\widehat{f}_\lambda}(x_i)\right)\right\Vert_{\mathbb{R}^D}^2.
\end{align}
In applications, higher values of the tuning parameter $\lambda$ reduce the effect of corrupting noise. The reduction from $\{x_i\}_{i=1}^I$ to $\widehat{Q}_N$ reduces the corrupting noise and, hence, the corresponding $\lambda$ is expected to be small. Therefore, the estimated optimal $\lambda^*$ tends to be small. Figure \ref{tuning parameter choice} illustrates the relationships between $\log\lambda$, $\log\lambda^*$, and MSD $\mathcal{D}(\widehat{f}_\lambda)$.
\begin{figure}[ht]
	\begin{center}
		\includegraphics[scale=0.305]{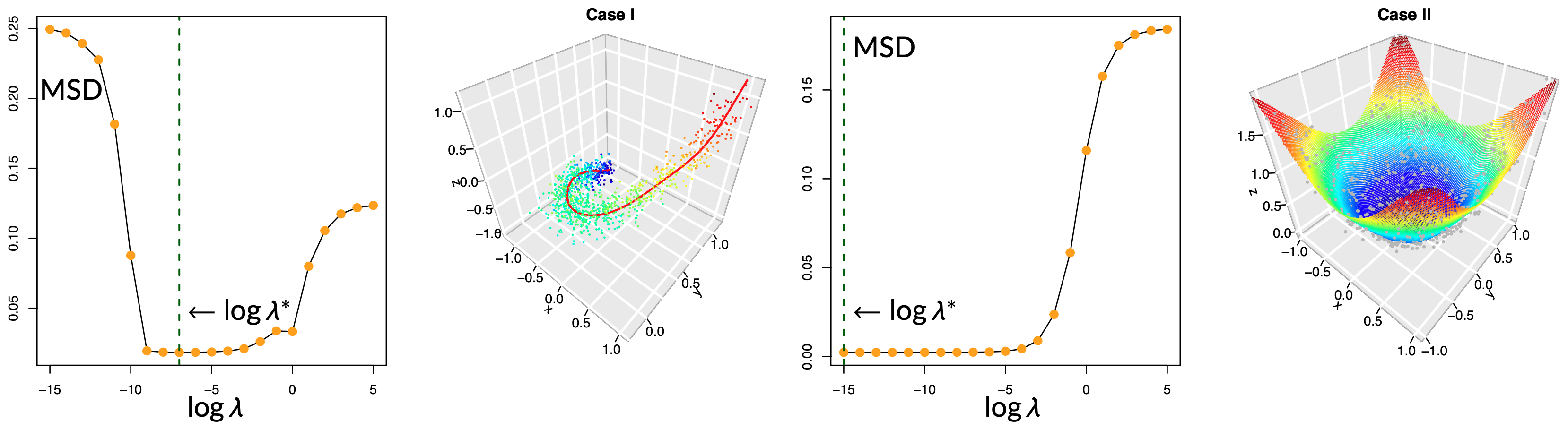}
	\end{center}
	\caption{Points (colored) in Case I are $1000$ realizations of $X$ with $(X\vert T=t)\sim N_3(t,0.1I_{3\times 3})$, $\tau\sim Unif(-1,1)$, $T=(\tau, \tau^2, \tau^3)^T$. The points (gray) in Case II are $1000$ realizations of $X$ with $(X\vert T=t)\sim N_3(t, 0.05I_{3\times 3})$, $\tau_1, \tau_2\sim_{iid}$ $Unif(-1,1)$, $T=(\tau_1, \tau_2, \tau_1^2+\tau_2^2)^T$. Using Algorithm \ref{PME Algorithm}, we fit the points in Case I and Case II, respectively. With tuning parameters $\lambda=e^{k}, k=-15, -14,\cdots, 5$, we plot MSD versus $\log\lambda$ as above. The (green) dash lines indicate optimal tuning parameters. For Case II, the reason why the smallest $\lambda$ is chosen is that the corresponding reduced points $\mu_{j,N}$ (with associated weights $\theta_{j,N}$) have almost no information of the $3$-dimensional corrupting noise $N(t, 0.05 I_{3\times 3})$.}\label{tuning parameter choice}
\end{figure}
\begin{algorithm}[ht]
	\caption{PME Algorithm:}\label{PME Algorithm}
	\begin{algorithmic}[1]
		\INPUT
		(i) Data points $\{x_i\}_{i=1}^I$ in $\R^D$; (ii) a positive integer $N_0<I-1$; (iii) $\alpha, \epsilon, \epsilon^*\in(0,1)$; (iv) candidate tuning parameters $\{\lambda_g\}_{g=1}^G$; (iv) $itr\ge 1$, which is the maximum number of iterations allowed.
		\OUTPUT (i) Analytic formula of $f^*: \R^d\rightarrow\R^D$ determining the fitted manifold $M_{f^*}^d$; (ii) optimal tuning parameter $\lambda^*$.
		\STATE Apply HDMDE (Algorithm \ref{HDMDE algorithm}) with input $(\{x_i\}_{i=1}^I,N_0,\epsilon,\alpha)$ and obtain $N$, $\{\mu_{j,N}\}_{j=1}^N$, and $\{\theta_{j,N}\}_{j=1}^N$.
		\STATE \textbf{Parameterization}: Apply ISOMAP to parameterize the reduced collection $\{\mu_{j,N}\}_{j=1}^N$ by the $d$-dimensional parameters $\{t_j\}_{j=1}^N$. Formally set $\pi_{f_{(0)}}\left(\mu_{j,N}\right)$ $\leftarrow$ $t_j$ for $j=1,2,\cdots,N$.
		\FORALL{$g=1,2,\cdots,G$} 	
		\STATE $\lambda\leftarrow\lambda_g$ and obtain $f_{(1)}$ by solving (\ref{smoothing matrix}).
		\STATE $\mathcal{E}$ $\leftarrow$ $2\times\epsilon^*$, $n\leftarrow1$, and $\mathcal{D}_{\widehat{Q}_N}(f_{(1)})\leftarrow\sum_{j=1}^N \theta_{j,N}\Vert \mu_{j,N} - f_{(1)}(\pi_{f_{(1)}}(\mu_{j,N}))\Vert_{\R^D}^2$.
		\WHILE {$\mathcal{E}\ge\epsilon^*$ and $n< itr$, }
		\STATE Compute $f_{(n+1)}$ from $f_{(n)}$ by solving (\ref{smoothing matrix}) and $$\mathcal{D}_{\widehat{Q}_N}(f_{(n+1)})\leftarrow\sum_{j=1}^N \theta_{j,N} \left\Vert \mu_{j,N} - f_{(n+1)}\left(\pi_{f_{(n+1)}}(\mu_{j,N})\right)\right\Vert_{\R^D}^2.$$
		\STATE $\mathcal{E}\leftarrow\vert[\mathcal{D}_{\widehat{Q}_N}(f_{(n+1)})-\mathcal{D}_{\widehat{Q}_N}(f_{(n)})]/\mathcal{D}_{\widehat{Q}_N}(f_{(n)})\vert$ and $n\leftarrow n+1$.
		\ENDWHILE
		\STATE $\widehat{f}_{g}\leftarrow$ $f_{(n)}$.
		\ENDFOR
                 \STATE $\kappa\leftarrow\max \{\Vert\pi_{\widehat{f}_{g^*}}(x_i)\Vert_{\R^d}:i=1,2,\cdots,I \}$, where $$g^*=\arg\min_{g}\left\{\frac{1}{I}\sum_{i=1}^I \left\Vert x_i - \widehat{f}_g\left(\pi_{\widehat{f}_g}(x_i)\right)\right\Vert^2_{\R^D}: g=1,2,\cdots, G \right\}.$$ 
                 \STATE $f^*(t)=\widehat{f}_{g^*}(\kappa t)$, where the analytic formula of $f^*$ is from (\ref{analytic expression of minimizer}), and $\lambda^*\leftarrow\lambda_{g^*}$.
	\end{algorithmic}
\end{algorithm}

Determining the pair $(N,\lambda^*)$ by HDMDE and (\ref{eq: optimal tuning parameter}) completes our model complexity selection procedure and, hence, the PME algorithm. The PME algorithm presented in Algorithm \ref{PME Algorithm} encapsulates the procedures presented in Sections \ref{section: The Reduction Step of PME} and \ref{explicit formula}. The \texttt{R} code for performing estimation using Algorithm 2 is available at \url{https://github.com/KMengBrown/Principal-Manifold-Estimation.git}. While a rigorous proof of the convergence of Algorithm \ref{PME Algorithm} is outside of the scope of this paper, the algorithm converged in almost all of the simulation studies conducted.


\section{Simulations}\label{simulations}

We compare the PME algorithm to existing methods for simulated data in the following three scenarios with dimension pairs $\left(d=1,D=2\right)$, $\left(d=1,D=3\right)$, and $\left(d=2,D=3\right)$. Simulation analyses in this section are implemented in the \texttt{R} software (\cite{citationR}). In the implementation of PME (Algorithm \ref{PME Algorithm}), we set its inputs as follows: $\lambda_g=\exp(g)$ for $g=-15,-14,\cdots, 5$, $N_0=20\times D$, $\alpha=0.05$, $\epsilon=0.001$, $\epsilon^*=0$, $itr$ values are given in Tables \ref{table 1d2D} and \ref{table 2d3D}. For the first two dimension pairs, we compare PME to two methods: (i) The HS principal curve algorithm using the \texttt{R} function \texttt{principal\_curve} in package \texttt{princurve} (version 2.1.4). Three \texttt{smoother} options - \texttt{smooth$\_$spline}, \texttt{lowess}, and \texttt{periodic$\_$lowess} - are provided in this \texttt{R} function. In each simulation, we apply all the three smoothers and apply the one producing the smallest MSD $\mathcal{D}(f)=\frac{1}{I}\sum_{i=1}^{I}\left\Vert x_i-f\left(\pi_f(x_i)\right)\right\Vert_{\R^D}^2$. (ii) ISOMAP: Parameterize the $I$ data points $\{x_i\}_{i=1}^I$, by $\{t_i\}_{i=1}^I$ using ISOMAP and then fit a map $f^*_{isomap}=\arg\min_{f}\{\frac{1}{I}\sum_{i=1}^{I}\Vert x_i-f\left(t_i\right)\Vert_{\R^D}^2+\lambda^*\Vert\nabla^{\otimes 2} f\Vert_{L^2\left(\R^1\right)}^2: f\in\nabla^{-\otimes 2}L^2(\R^1\rightarrow\R^D)\}$, where $\lambda^*$ is chosen to be the optimal tuning parameter obtained in the corresponding PME fit to make the comparison fair. The minimizer $f^*_{isomap}$ is achieved by cubic smoothing splines (\cite{duchon1977splines}). For $(d=2,D=3)$, we compare PME to two methods: (i) the principal surface (PS) algorithm introduced by \cite{yue2016parameterization}, where the optimal number of basis functions in PS is obtained by the new cross-validation method proposed by \cite{yue2016parameterization}, using the \texttt{R} function for PS provided by the first author of \cite{yue2016parameterization}; (ii) ISOMAP: Conducted as described above for its counterpart with $d=1$, except we apply thin plate splines to achieve the corresponding minimizer (\cite{duchon1977splines}). For all scenarios, the performance measurement of a fitted $f$ is the MSD $\mathcal{D}(f)$. Although we apply ISOMAP to the parameterization step of PME, the implementation of PME has an important advantage compared to solely applying ISOMAP. When directly applying ISOMAP to the observed data, we use the full set of $I$ data points for parameterization, while within PME we apply ISOMAP to parameterize the reduced set $\{\mu_{j,N}\}_{j=1}^N$ in the process of iteration. Since ISOMAP is computationally expensive for high-dimensional data, and the sample size $I$ tends to be much larger than $N$ (see Figure \ref{fig:skeleton} (c)), directly applying ISOMAP to the full observed data is much more time consuming than applying the PME approach. As shown by our simulation studies, this gain in reduction of computation time is accompanied by similar performance of PME as compared to ISOMAP.

Visualizations of the estimation results for some curves and surfaces are presented in Figures \ref{simulation graph d1D2} and \ref{simulation graph d1D3}. When $(d=1, D=2)$, we generate data under four settings in Figure \ref{simulation graph d1D2} and apply all corresponding methods for each of the four cases; when $(d=1, D=3)$ or $(d=2, D=3)$, we generate data using the three different data cloud scenarios in Figure \ref{simulation graph d1D3} and apply all corresponding methods in each of all the three cases. For each method in each case, we run $100$ simulations with simulated data sets of size $I=1000$ and summarize the simulation results in Tables \ref{table 1d2D} and \ref{table 2d3D}, where the mean and standard deviation (sd) of the $100$ simulated MSD in each case for each method are presented. The column ``itr" in these tables shows the number of iterations conducted for each algorithm. Column groups (a) (b) (c) (d) in Table \ref{table 1d2D} correspond to the panels (a) (b) (c) (d) in Figure \ref{simulation graph d1D2}, respectively. Column groups (a) (b) in Table \ref{table 2d3D} correspond to the panels (a) (b) in Figure \ref{simulation graph d1D3}, respectively. The column group (c) in Table \ref{table 2d3D} corresponds to the row (c), i.e., the lower panels, of Figure \ref{simulation graph d1D3}. Except for ISOMAP, all methods take less than ten minutes to run in each simulation in all cases on a PC with a \texttt{2.6 GHz Intel Core i5} processor and \texttt{8 GB 1600 MHz DDR3} memory. In all our simulations, when $d=1$, PME and HS take a similar amount of time to run; when $d=2$, PME and PS take a similar amount of time to run. Further optimization of authors' \texttt{R} code should make the proposed PME algorithm more efficient.  
\begin{table}[ht]
	\centering
	\caption{MSD comparison: $d=1$ and $D=2$. (The unit of mean and sd is $10^{-3}$, and the lowest mean in each column is in bold.)}\label{table 1d2D}
	\begin{tabular}{llllllllllllllll} 
		\hline
		&     &   (a)   &        &  &     &   (b)   &       &  &     &  (c)   &        &  &     &  (d)   &         \\ 
		\cline{2-4}\cline{6-8}\cline{10-12}\cline{14-16}
		Methods & itr & mean  & sd     &  & itr & mean  & sd    &  & itr & mean  & sd     &  & itr & mean  & sd      \\ 
		\hline
		PME     & 20  & 5.995 & 0.4082 &  & 100 & \textbf{40.77} & 1.682 &  & 10  & \textbf{10.09} & 0.6029 &  & 5   & 23.66 & 1.082   \\
		HS      & 300 & 28.33 & 8.690  &  & 200 & 351.8 & 8.702 &  & 100 & 12.96 & 0.4344 &  & 5   & 24.21 & 1.120   \\
		ISOMAP  & 0   & \textbf{5.712} & 0.3297 &  & 0   & 40.97 & 1.802 &  & 0   & 10.12 & 0.4171 &  & 0   & \textbf{23.50} & 0.8739  \\
		\hline
	\end{tabular}
\end{table}
\begin{figure}[ht]
	\begin{center}
		\includegraphics[scale=0.31]{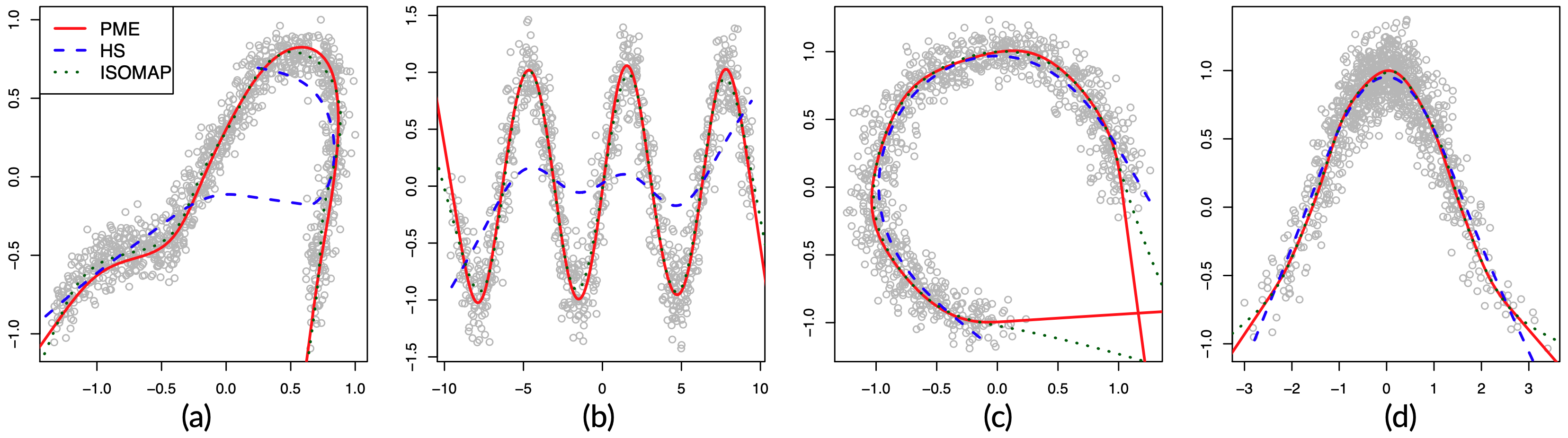}
	\end{center}
	\caption{ Illustration of simulation settings. In each setting, data (in gray) are generated as follows: (a) a $1/4$ part of one slice of a CT data set presented in Section \ref{An Application} is used with added Gaussian noise; (b) realizations of $X$ with $\left(X\vert T=t\right)\sim N_2(t, 0.2 I_{2\times2})$, $T=(\tau, \sin\tau)^T$ and $\tau\sim Unif(-3\pi,3\pi)$; (c) realizations of the $X$ in Figure \ref{fig:skeleton} (b) (without the $10$ outliers); (d) realizations of $X$ with $\left(X\vert T=t\right)\sim N_2(t, 0.15I_{2\times 2})$, $T=(\tau,\cos\tau)^T$ and $\tau\sim N(0,1)$.}\label{simulation graph d1D2}
\end{figure}
\begin{table}[ht]
	\centering
	\caption{MSD comparison: $d=1, 2$ and $D=3$. (The unit of mean and sd is $10^{-3}$, and the lowest mean in each column is in bold.)}\label{table 2d3D}
	\begin{tabular}{lllllllllllllllllllllllll} 
		\cline{1-9}\cline{12-16}
		$d=1$    &  &     & (a)  &    &  &     & (b)  &    &  &  &     $d=2$    &  &     &   (c)   &   \\ 
		\cline{1-1}\cline{3-5}\cline{7-9}\cline{12-12}\cline{14-16}
		Methods &  & itr & mean & sd &  & itr & mean & sd &  &  & Methods &  & itr & mean & sd\\ 
		\cline{1-9}\cline{12-16}
		PME     &  &  100   &   \textbf{18.58}   &  0.6023  &  &  100   &   5.320   &  0.2115  &  &  & PME     &  &  10   &   2.522   & 0.1138\\
		HS      &  &  200   &   21.23   &  0.6294  &  &  500   &   88.03   &  0.7479  &  &  & PS      &  &   10  &   2.520   &  0.1137\\
		ISOMAP  &  &   0  &   19.52   &  0.6163  &  &  0   &   \textbf{5.214}   &  0.1661  &  &  & ISOMAP  &  &   0  &   \textbf{2.496}   &  0.1103\\
		\cline{1-9}\cline{12-16}
	\end{tabular}
\end{table}
\begin{figure}[ht]
	\begin{center}
		\includegraphics[scale=0.33]{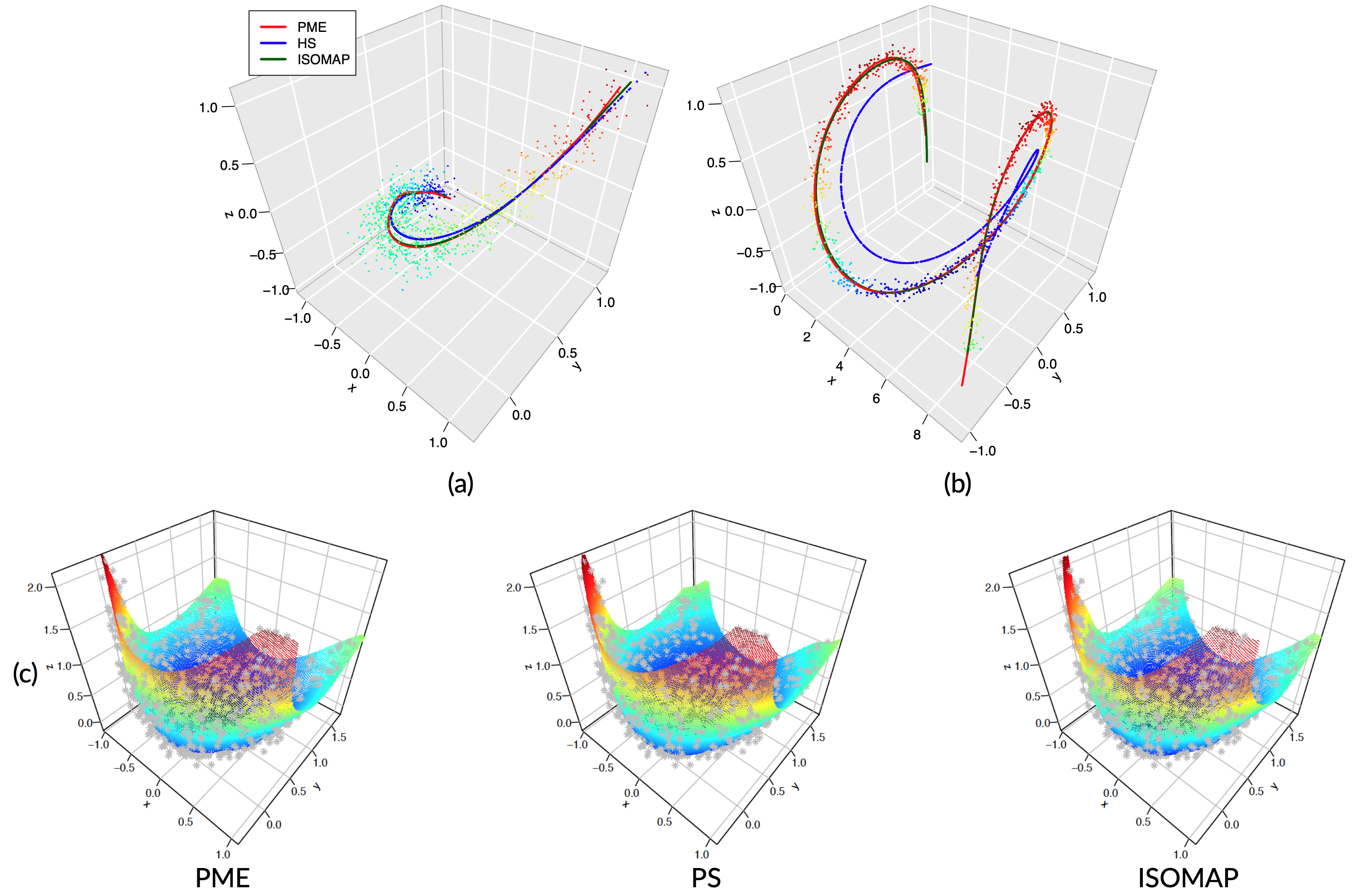}
	\end{center}
	\caption{In each case, $X$ are generated as follows: (a) using the same approach as Case I in Figure \ref{tuning parameter choice}; (b) $(X\vert T=t)\sim N_3(t, 0.05I_{3\times3})$, $T=(\tau, \cos\tau, \sin\tau)^T$, $\tau\sim Unif(\pi/2, 6\pi)$; the three lower panels share the same data (gray), where $(X\vert T=t)\sim N_3(t, 0.05I_{3\times3})$, $\tau_1,\tau_2\sim_{iid} Unif(-1, 1)$, and $T=(\tau_1, \frac{1}{2}(\tau_2+\sqrt{3}(\tau_1^2+\tau_2^2)), \frac{1}{2}(\tau_1^2+\tau_2^2-\sqrt{3}))^T$.} \label{simulation graph d1D3}
\end{figure}

\noindent\textbf{Simulation results}: (i) For ($d=1, D=2$), Figure \ref{simulation graph d1D2} (a) (b) show that PME performs much better than HS. Figure \ref{simulation graph d1D2} (c) (d) show that PME performs slightly better than HS. ISOMAP and PME perform similarly well in all four cases. The difference between them is visible only near the tails of data clouds. (ii) For ($d=1,D=3$), Figure \ref{simulation graph d1D3} (a) shows that the three methods perform similarly well. Figure \ref{simulation graph d1D3} (b) shows that PME and ISOMAP perform similarly well, and both of them perform much better than HS. (iii) For ($d=2,D=3$), Figure \ref{simulation graph d1D3} (c) shows that PME, PS, and ISOMAP perform equally well. Tables \ref{table 1d2D} and \ref{table 2d3D} support these conclusions. In conclusion, PME performs either significantly or marginally better than HS, and PME is not inferior to ISOMAP. However, ISOMAP is extremely time consuming in all scenarios compared to other methods. If we increase the size of simulated data sets, applying ISOMAP becomes infeasible. 


\section{Interior Identification}\label{Interior classifier}

In this Section, we propose an algorithm to identify the interiors of circle-like curves ($d=1, D=2$) and cylinder/ball-like surfaces ($d=2, D=3$). Examples of such curves and surfaces are presented in  Figure \ref{Classifier illustration}. In many applications, the target is not the surface of an object, but its interior. For example, radiation therapists may be interested in identifying the interior of a tumor, which contains malignant cells. We propose an interior identification method based on PME.

Let $\underline{M}^d$ denote a circle-like curve ($d=1$) or cylinder/ball-like surface ($d=2$) contained in a domain $\mathcal{E}\subset\R^D$, e.g., the punched sphere in Figure \ref{Classifier illustration} (b) is contained in a cube. The main idea of the interior identification approach is as follows: \textbf{Stage 1}, we decompose $\mathcal{E}$ into several potentially overlapping subsets, i.e., $\mathcal{E}=\bigcup_{s=1}^S E[s]$, where $E[s]$ are subsets of $\mathcal{E}$; \textbf{Stage 2}, we identify the interior of each piece $\underline{M}^d\bigcap E[s]$; \textbf{Stage 3}, we ``glue" the piecewise interior identification results of all $\underline{M}^d\bigcap E[s]$ using the \textit{10-nearest neighborhood classifier} and obtain the interior estimation of $\underline{M}^d=\bigcup_{s=1}^S(\underline{M}^d\bigcap E[s])$. In Stage 2, we assume that, for each index $s$, there exists an $f_{\underline{s}}:\R^d\rightarrow\R^D$ such that $\underline{M}^d\bigcap E[s]=M^d_{f_{\underline{s}}}\bigcap E[s]$, where the manifold $M^d_{f_{\underline{s}}}$ is defined by Definition \ref{Def 2} and estimated using PME.

We first propose the interior identification approach for each piece $M_f^d\bigcap E$, where $f:\R^d\rightarrow\R^D$ and $E$ is a sub-domain of $\mathcal{E}$. Let $\overrightarrow{\pmb{n}}(t)$ denote a normal vector of $M_f^d$ at point $f(t)$. For example, 
\begin{align*}
    & \overrightarrow{\pmb{n}}(t)=\left(-\frac{df_2}{dt}(t), \frac{df_1}{dt}(t)\right)^T \mbox{ when $d=1$ and $D=2$, and} \\
    & \overrightarrow{\pmb{n}}(t)=\left(\frac{\partial f_2}{\partial t_1}\frac{\partial f_3}{\partial t_2}-\frac{\partial f_3}{\partial t_1}\frac{\partial f_2}{\partial t_2},\frac{\partial f_3}{\partial t_1}\frac{\partial f_1}{\partial t_2}-\frac{\partial f_1}{\partial t_1}\frac{\partial f_3}{\partial t_2},\frac{\partial f_1}{\partial t_1}\frac{\partial f_2}{\partial t_2}-\frac{\partial f_2}{\partial t_1}\frac{\partial f_1}{\partial t_2}\right)^T \mbox{ when $d=2$ and $D=3$.}
\end{align*}
Computing $\overrightarrow{\pmb{n}}(t)$ is possible since we have the analytic formula (\ref{analytic expression of minimizer}). For a fixed $\xi\in\R^D$, $Orit(\xi, f)=sgn\left\{\left[ f\left(\pi_{f}(\xi)\right)-\xi\right]^T \overrightarrow{\pmb{n}}\left(\pi_{f}(\xi)\right)\right\}$ is called the \textit{orientation} of $\xi$ with respect to $f$, where $sgn(\cdot)$ is the sign function. Let $c^*$ be a predetermined point indicating the interior side of $M_f^d\bigcap E$. It is called the \textit{reference point}. Then all the points in $E$ sharing the same orientation with $c^*$ are identified as interior points, i.e., the interior part of $M^d_f\bigcap E$ is estimated by $\mathcal{I}\left(f, c^*\right)\bigcap E$, where $\mathcal{I}\left(f, c^*\right)=\left\{\xi\in\R^D:Orit(\xi,f)\times Orit(c^*,f)>0\right\}$. A geometric illustration is presented in Figure \ref{Classifier illustration} (a).
\begin{figure}[ht]
	\begin{center}
		\includegraphics[scale=0.27]{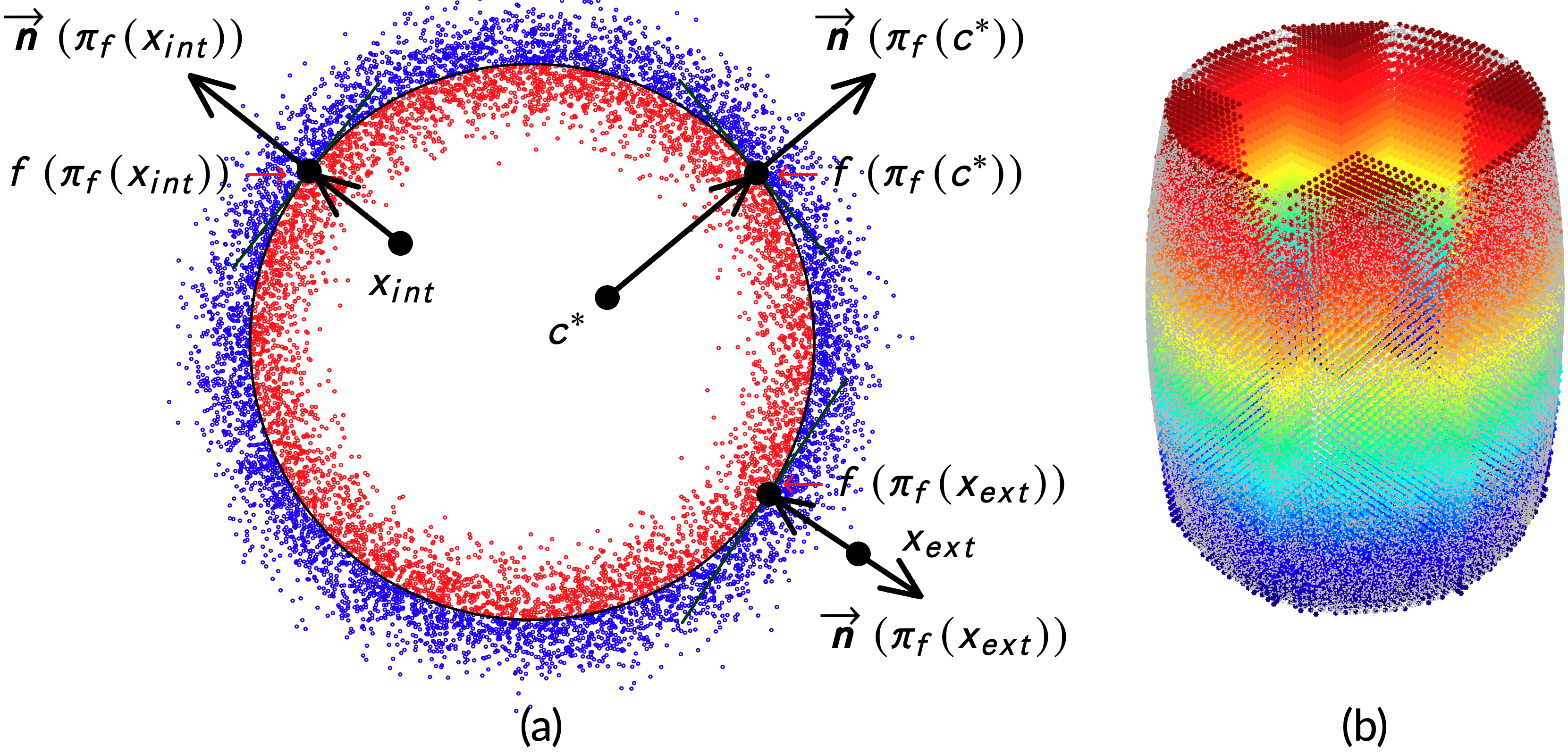}
	\end{center}
	\caption{(a) An illustration of $\pmb{n}(\pi_f(\cdot))$, $f(\pi_f(\cdot))$, and the reference point $c^*$. (b) A simulation example, where $10000$ data points are from $(\sin\tau_1\times\cos\tau_2, \sin\tau_1\times\sin\tau_2,\cos\tau_1)^T$, with $\tau_1\sim Unif(\pi/4,3\pi/4),\tau_2\sim Unif(0,2\pi)$. There is no $3$-dimensional corrupting noise in these data. The colored points indicate the $\xi_j$ identified as interior of the punched sphere. To illustrate the boundaries of the cubes $E[k]$, we omit all interior $\xi_j$ outside of $\mathcal{E}=\bigcup_{k=1}^8 E[k]$.}\label{Classifier illustration}
\end{figure}

Secondly, we explain the interior identification approach for the entire $\underline{M}^d$ by an example - fitting the $I=10000$ data points $\{x_i\}_{i=1}^I$ in Figure \ref{Classifier illustration} (b) (gray points). These data points are simulated from a punched sphere, which is the $\underline{M}^d$ of interest. The reference point $c^*=(0,0,0)^T$ is the centroid of the sphere. The points to be identified are grid-points $\xi_j$, such as the colored points in Figure \ref{Classifier illustration} (b). In this example, we show constructions of the domain $\mathcal{E}$ and its subsets $E[s]$. We identify the grid-points $\xi_j$ interior of this punched sphere using the following procedure.\\
\textbf{Step 1:} For each $3$-dimensional vector $x_i=(x_{i,1}, x_{i,2}, x_{i,3})^T$ where $x_{i,l}$ denotes the $l^{th}$ component of $x_i$, let $(\phi_i, r_i)^T$ be the polar coordinate of the $2$-dimensional vector $(x_{i,1}, x_{i,2})^T$ and $\phi_i$ be the corresponding angle component. Partition $\{x_i\}_{i=1}^I$ into $8$ subsets by $\mathcal{Z}[k]=\{x_i: \frac{(k-1)\pi}{4}\le \phi_i< \frac{k\pi}{4}\}$ for $k=1,2,\cdots, 8$. \\
\textbf{Step 2:} Define the cubes $E[k]=\prod_{l=1}^3 \left[\inf_{x_i\in\mathcal{Z}[k]}x_{i, l}, \sup_{x_i\in\mathcal{Z}[k]}x_{i, l}\right]$ for all $k$. Then $\mathcal{E}=\bigcup_{k=1}^8 E[k]$ contains all $x_i$.\\
\textbf{Step 3:} Fit an $f_1$ to data in $\mathcal{Z}[8]\bigcup\mathcal{Z}[1]$ and an $f_{k}$ to data in $\mathcal{Z}[k-1]\bigcup\mathcal{Z}[k]$ for all $k=2,3,\cdots, 8$ using PME.\\
\textbf{Step 4:} For each $k$, define $x^*_k=\frac{1}{\#\mathcal{Z}[k]}(\sum_{x_i\in\mathcal{Z}[k]}x_i)\in\mathbb{R}^3$, where $\#\mathcal{Z}[k]$ denotes the number of elements in $\mathcal{Z}[k]$. \\
\textbf{Step 5:} All grid-points $\xi_j\notin\mathcal{E}$ are identified as exterior. For each $\xi_j\in\mathcal{E}$, compute $k=\arg\min_{k'=1,2,\cdots,8}\{\Vert \xi_j-x^*_{k'}\Vert_{\R^3}\}$. Since both $f_k$ and $f_{k+1}$ fit data in $\mathcal{Z}[k]$, there are three possible scenarios:\\ 
(i) $\xi_j\in\mathcal{I}\left(f_k, c^*\right)\bigcap\mathcal{I}\left(f_{k+1}, c^*\right)$, i.e., $\xi_j$ is identified as interior by both $f_k$ and $f_{k+1}$, then $\xi_j$ is identified as interior and labeled by ``int";\\ (ii) $\xi_j\notin\mathcal{I}\left(f_k, c^*\right)\bigcup\mathcal{I}\left(f_{k+1}, c^*\right)$, i.e., $\xi_j$ is identified as exterior by both $f_k$ and $f_{k+1}$, then $\xi_j$ is identified as exterior and labeled by ``ext";\\
(iii) $\xi_j$ satisfies neither the previous two scenarios, then we identify $\xi_j$ by applying 10-nearest neighborhood classifier to the labeled training set $\left\{(\xi_q, lab_q):\mbox{$\xi_q\in E[k]$ and $\xi_q$ satisfies scenario (i) or (ii)}\right\},$ where $lab_q\in\{\mbox{``int", ``ext"}\}$ is the label of $\xi_q$.

The performance of this interior identification procedure is shown in Figure \ref{Classifier illustration} (b). Since we know the true punched sphere generating data, the true interior/exterior labels of grid-points $\xi_j \in \mathcal{E}=\bigcup_{k=1}^8 E[k]$ with respect to this punched sphere are known. The identification error rate - the proportion of incorrectly estimated labels - is less than $0.1\%$. In the illustrative example in Figure \ref{Classifier illustration}, we automatically and evenly divide data $x_i$ into eight subcollections. In general, depending on the shape of the observed data, we may need to divide data into more/fewer subcollections. Additionally, an uneven division might be suitable for some data sets. For example, we may conduct a finer division in a region containing a large number of data points than in a region containing only a few data points. Determining the number of subcollections and division precision in individual regions is left for future research. Additionally, future research may extend our proposed methods for identifying the interiors of a more general set of manifolds.

\begin{figure}[ht]
	\begin{center}
		\includegraphics[scale=0.31]{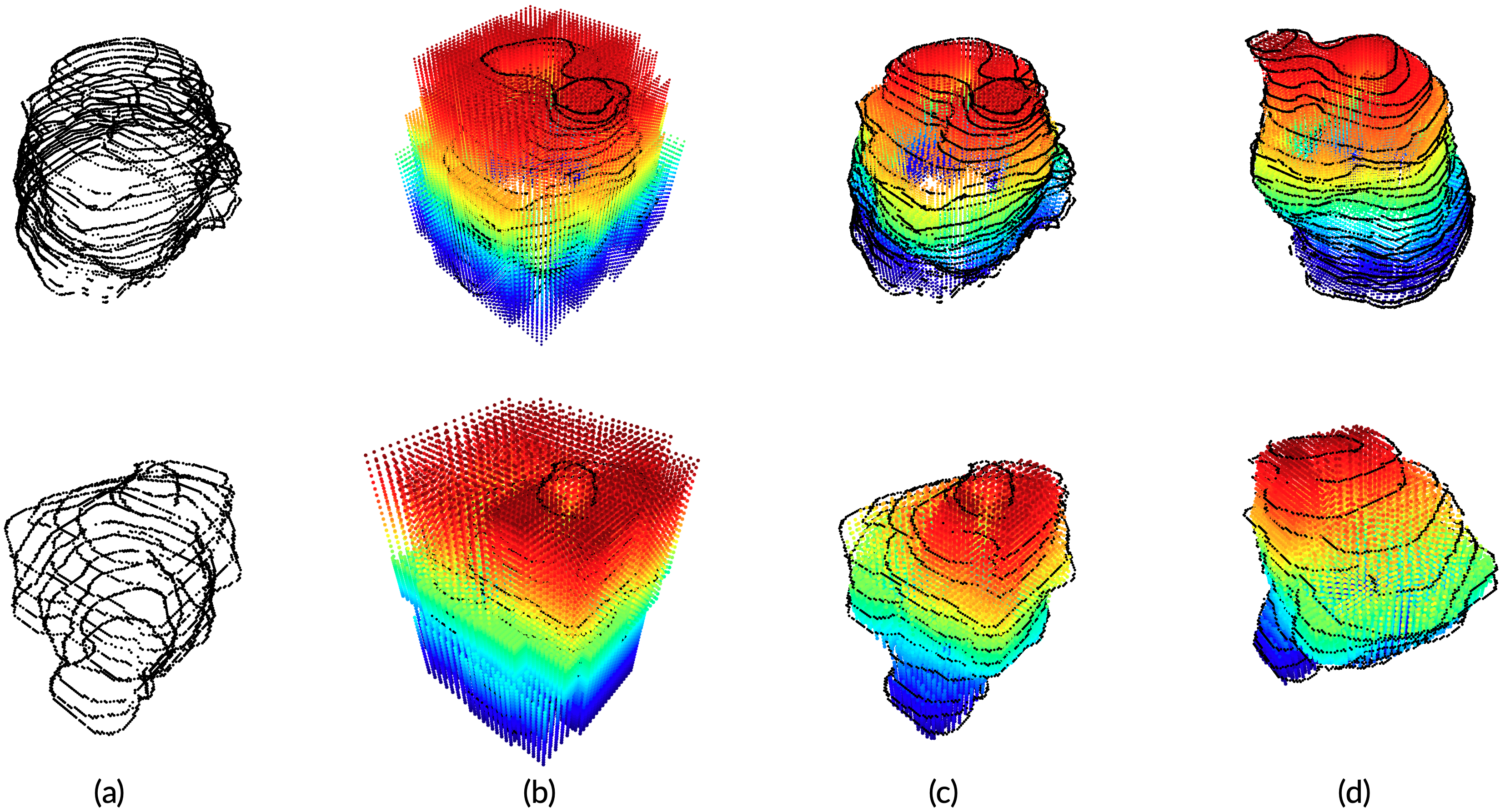}
	\end{center}
	\caption{The black points in column (a) denote the CT data of two tumors. The colored points in column (b) denote the points to be identified. The colored points in column (c,d) denote the points identified as interior of the tumors. The last two columns show different angles of the tumors.} \label{tumor interior classifier}
\end{figure}

\section{Analysis of Lung Cancer Tumor Data}\label{An Application}

In this section, we consider tumor surface estimation using computed tomography (CT) scans collected from patients with lung cancer and the identification of tumor interior in the context of radiation therapy. We analyzed two tumor data sets from a publicly available database collected for 422 patients with non-small cell lung cancer at the MAASTRO Clinic (Maastricht, The Netherlands) and available at \url{http://www.cancerimagingarchive.net/}. Spiral CT scans of the thoracic region with a $3mm$ slice thickness are obtained for each study participant. In addition, the masks of the tumor hand segmented by a radiologist are provided in the database. The hand segmentation result is a collection of voxels (3-dimensional counterparts of pixels) in 3-dimensional space marked by the radiologist as points on the surface of the tumor. The details on imaging parameters are available on the website, the references provided therein, and are not repeated in this section. The vertices of the tumors for the two participants are presented in Figure \ref{tumor interior classifier} (a). Given that we only have a collection of points on the tumor's surface, it is necessary to estimate the surface of the tumor fitted to the manually selected vertices on the surface of the tumor.  In addition to estimating the tumor surface, it may be of interest to identify the interior area of the tumor. For example, in radiation therapy, ionizing radiation is used to control or kill cancer cells. To avoid harming healthy tissue with unnecessary doses of radiation, identifying the interior region of a tumor is important. Since the shape of the tumors is similar to a punched sphere, we apply the same procedures as in the example in Section \ref{Interior classifier} to identify the interior part of these tumors.
\begin{figure}[ht]
	\begin{center}
		\includegraphics[scale=0.27]{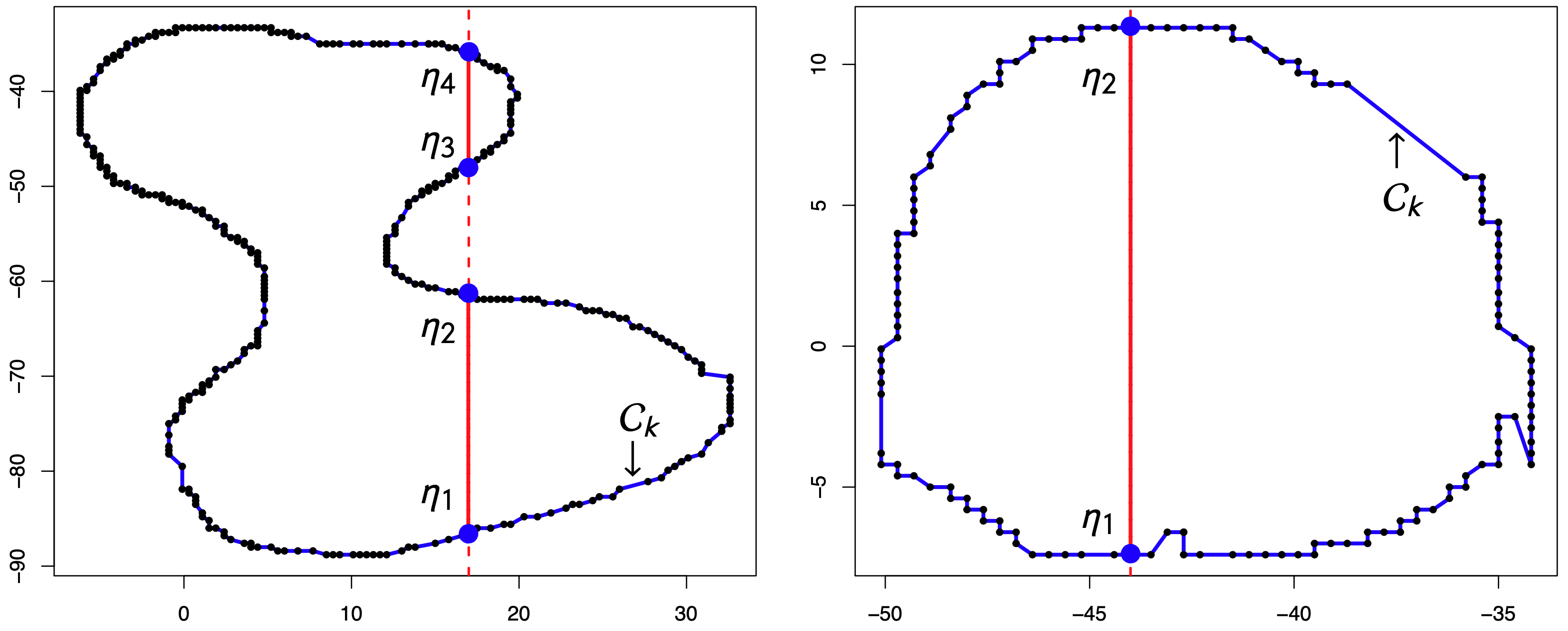}
	\end{center}
	\caption{Left: a single slice from the CT data for one subject (presented in the upper panels of Figure \ref{tumor interior classifier}). Right: a single slice from the CT data for another subject (presented in the lower panels of Figure \ref{tumor interior classifier}).} \label{Figure: 2nd interior method}
\end{figure}

The interior identification result is in Figure \ref{tumor interior classifier}. Visually, the proposed method can properly identify interior points. We also use a simple approach to identify tumor interior points given the surface voxels and obtain a rough estimate of the validity of our proposed method. By its nature, the CT data is a collection of grid points in a 3D box $\mathcal{B}=[X_L, X_U] \times [Y_L, Y_U] \times [Z_L, Z_U]$ along with the intensities of all voxels in this grid. Let $\mathcal{X}^k=\{\xi^k_j\}_{j}$ be the collection of 3D data points in the $k^{th}$ slice of the CT scan (all $k$ here indicate the $k^{th}$ slice). All the points in $\mathcal{X}^k$ share the same $Z$-coordinate $z^k \in [Z_L, Z_U]$. To identify the interior of the tumor in the rectangle $[X_L, X_U] \times [Y_L, Y_U] \times \{z^k\}$, e.g., the rectangles in Figure \ref{Figure: 2nd interior method}, we use a linear interpolation to connect consecutive points $\xi^k_j$ and $\xi^k_{j+1}$. As a result, we obtain a piecewise linear and closed curve $\mathcal{C}_k$. The curve $\mathcal{C}_k$ (the blue curves in Figure \ref{Figure: 2nd interior method}) roughly indicates the boundary of the tumor in this slice. For any $x^k \in [X_L, X_U]$, let $\{\eta_1, \eta_2, \cdots\}$ be the intersection of the line segment $\{x^k\} \times [Y_L, Y_U]$ and $\mathcal{C}_k$ (the blue dots in Figure \ref{Figure: 2nd interior method}). The points on line segments of the form $$\bigcup_{l}\Big\{\lambda\eta_{2l-1} + (1-\lambda)\eta_{2l}: \lambda \in [0,1]\Big\}$$ 
are identified as the tumor's interior. These line segments are the solid red ones in Figure \ref{Figure: 2nd interior method}. 

Finally, for each of the $50\times50\times\#\{z_k\}$ grid points ($\#\{z_k\}$ denotes the number of slices), we compare the labels given by the rough approximation approach in the preceding paragraph and the PME based interior identification method, respectively. For the two tumor data sets in Figure \ref{tumor interior classifier}, $97.1\%$ of the grid points are given the same labels by these two methods for subject 1 (top panel) and $95.4\%$ for subject 2 (lower panel). Hence, we conclude that these two methods perform similarly in these two examples. However, the rough approximation method has major shortcomings. If the number of points identified by the hand segmentation is small, the line segments in the rough approach will result in a poor estimate of the tumor surface, leading to poor interior/exterior classification performance. Additionally, any outlier surface voxels will potentially have significant negative effects on the rough classifier, while the PME approach is robust to the effects of outliers. Even though the rough approach has these limitations, we considered comparing it to our proposed approach as we have no gold standard classifier to illustrate the performance of our algorithm.


\section{Conclusions and Discussion}

In this paper, we propose a framework of principal manifolds for arbitrary intrinsic dimensions using Sobolev spaces. A Sobolev embedding theorem guarantees the regularity of principal manifolds. To reduce the computational cost and the effects of outliers, and to select model complexity, we propose the HDMDE algorithm motivated by \cite{eloyan2011smooth}. Future work may evaluate the performance of HDMDE as a density estimation technique in high-dimensional settings. Based on HDMDE, we develop the PME algorithm to estimate the newly proposed principal manifolds with intrinsic dimensions $d\le3$. We use simulations to compare PME to existing methods for scenarios with dimension pairs $(d=1,D=2)$, $(d=1,D=3)$, and $(d=2,D=3)$. These simulations illustrate that PME performs better than HS in many scenarios in the sense of minimizing MSD, ISOMAP is too computationally expensive compared to PME, and PME is not inferior to PS. However, PS is only defined for $d=2$. Additionally, PS does not provide an explicit and simple formula of the map $f:\R^d\rightarrow\R^D$ defining estimated $M_f^d$, while we obtain such a formula using PME. We apply PME to radiation therapy by identifying the interiors of tumors, which are targets of ionizing radiation. 

HS principal curves, which apply to data in Euclidean spaces, have been generalized to data on \textit{Riemannian manifolds} (e.g., \cite{hauberg2015principal} and \cite{kim2020spherical}). In the meantime, PCA has been generalized to geodesics in \textit{Wasserstein spaces} (e.g., \cite{boissard2015distribution} and \cite{seguy2015principal}). Future work may investigate the generalization of our proposed framework, for which the HS principal curve algorithm and PCA are special cases, to counterparts on Riemannian manifolds and in Wasserstein spaces. \cite{kirov2017multiple} introduced a \textit{multiple penalized principal curve} framework permitting a fitted $1$-dimensional structure to consist of several disconnected curves. Our proposed framework may be generalized to fit a high-dimensional structure consisting of several disconnected components.

\section{Acknowledgements}

The authors thank the editorial team and the anonymous referees for their insightful comments that significantly improved the quality of this paper. Additionally, the authors thank Dr. Matthew T. Harrison for his comments that improved this paper's readability and Dr. Chen Yue for providing the \texttt{R} function for implementing the PS algorithm. The project was supported by Grant Number 5P20GM103645 from the National Institute of General Medical Sciences.

\newpage

\section{Appendix}\label{proofs appendix}

\textbf{Proof of Theorem \ref{fundation of projection index}}: Since $\Vert f(t) \Vert_{\R^D}\rightarrow\infty$ as $\Vert t\Vert_{\R^d}\rightarrow\infty$, there exists $M>0$ such that $\Vert x-f(t)\Vert_{\R^D}>1+dist(x,f)$ if $\Vert t\Vert_{\R^d}>M$. Then $dist(x,f)=\inf\{\Vert x-f(t)\Vert_{\R^D}: t\in \overline{B_d}(0,M)\}$, where $\overline{B_d}(0,M)=\{t\in\R^d:\Vert t\Vert_{\R^d}\le M\}$. The compactness of $\overline{B_d}(0,M)$ implies that there exists $t^*\in\overline{B_d}(0,M)$ so that $dist(x,f)=\Vert x-f(t^*)\Vert_{\R^D}$, then $\mathcal{A}_f(x)$ is nonempty. We have
\begin{align*}
    \mathcal{A}_f(x)=\left\{t\in\overline{B_d}\left(0,M\right):\left\Vert x-f(t)\right\Vert_{\R^D}\le dist(x,f)\right\}=\overline{B_d}(0,M)\bigcap f^{-1}\left(\overline{B_D}\left(x,dist(x,f)\right)\right),
\end{align*}
where $\overline{B_D}(x, dist(x,f))=\{x'\in\R^D:\Vert x-x'\Vert_{\R^D}\le dist(x,f)\}$ is compact, and $f^{-1}\left(\overline{B_D}(x,dist(x,f))\right)$ is closed as $f$ is continuous. Since $\overline{B_d}(0,M)$ is bounded and closed, $\mathcal{A}_f(x)$ is compact. \qed

\noindent\textbf{Proof of Theorem \ref{measurability of projection indices}} (i) can be proved following the method used to prove Theorem 4.1 in \cite{hastie1984principal}. (ii) $\phi(x)=dist(x,f)$, then $\phi\in C(B)$ and $m^*=\max_{x\in B}\phi(x)<\infty$. Let $\zeta(t)=\inf_{x\in B}\Vert x - f(t)\Vert_{\R^D}$, then $\zeta(t)\ge \Vert f(t)\Vert_{\R^D}-\sup_{x\in B}\Vert x\Vert_{\R^D}\rightarrow\infty$ as $\Vert t\Vert_{\R^d}\rightarrow\infty$. $\exists M>0$ so that $\zeta(t)>m^*$ if $\Vert t\Vert_{\R^d}\ge M$. For all $x\in B$, $dist(x,f)\le m^*<\zeta(t)\le\Vert x - f(t)\Vert_{\R^D}$ if $\Vert t\Vert_{\R^d}\ge M$, which implies $\{t:\Vert t\Vert_{\R^d}\ge M\}\cap[\bigcup_{x\in B}\mathcal{A}_f(x)]=\emptyset$. Then $\{\pi_f(x):x\in B\} \subset \{t:\Vert t\Vert_{\R^d}<M\}$. (iii) Let $\Pi=f\circ\pi_f:\mathbb{R}^D\rightarrow M^d_f$. Since $f$ has no ambiguity point on $U$, Theorem 1.3 of \cite{dudek1994nonlinear} implies that $\Pi$ is continuous on $U$. Since $f^{-1}$ is continuous, $\pi_f=f^{-1}\circ\Pi$ is continuous on $U$. \qed

\noindent\textbf{Proof of Theorem \ref{principal manifolds and PCA}}: That $\mathcal{K}_{\infty,\mathbb{P}}(f)<\infty$ only if $\left\Vert\nabla^{\otimes 2}f\right\Vert_{L^2(\R^d)}=0$ implies the generalized (not classical) derivatives $\frac{\partial^2 f_l}{\partial t_i \partial t_j}=0$, for $1\le i,j\le d$ and $l=1,2,\cdots,D$, almost everywhere. From Lemma \ref{space equivalence} and the Corollary 3.32 of \cite{adams2003sobolev}, $f$ equals an affine function almost everywhere. The continuity of $f$ implies that $f$ equals this affine function exactly everywhere. Then $f\left(\pi_f(X)\right)$ is the projection of $X$ to some hyperplane. Therefore, $\inf_{f\in\mathscr{F}(\mathbb{P})}\mathcal{K}_{\infty,\mathbb{P}}(f)=\inf_{\pmb{C}\in\mathscr{P},a\in\mathbb{R}^D}\mathbb{E}\Vert X-(\pmb{I}-\pmb{C})a-\pmb{C}X\Vert_{\mathbb{R}^D}^2$, where $\mathscr{P}=\{\pmb{C}\in\mathbb{R}^{D\times D}: \pmb{C}^2=\pmb{C}^T=\pmb{C}, rank(\pmb{C})=d\}$ is the collection of projection matrices of rank $d$. Then $\inf_{f\in \mathscr{F}(\mathbb{P})}\mathcal{K}_{\infty,\mathbb{P}}(f)$ is equal to $\inf\{\mathbb{E}\left\Vert (\pmb{I}-\pmb{C})(X-a)\right\Vert_{\mathbb{R}^D}^2: \pmb{C}\in\mathscr{P}, a\in\mathbb{R}^D\}=\inf\{ tr [(\pmb{I}-\pmb{C})\pmb{U}\pmb{D}\pmb{U}^T(\pmb{I}-\pmb{C})^T]+\left\Vert(\pmb{I}-\pmb{C})\left(\mathbb{E}X-a\right)\right\Vert^2_{\R^D}: a\in\mathbb{R}^D, \pmb{C}\in\mathscr{P}\},$ where $\pmb{U}=(\pmb{v}_1, \pmb{v}_2, \cdots, \pmb{v}_D)$ and $\pmb{D}=diag(e_1, e_2, \cdots, e_D)$. The minimum is achieved by the minimizer $(\pmb{C}^*, a^*)$, where $\pmb{C}^*$ is the projection matrix to the subspace $\{\sum_{i=1}^d \alpha_i\pmb{v}_i:\alpha_i\in\mathbb{R}^1\}$ and $a^*$ satisfies $(\pmb{I}-\pmb{C}^*)(\mathbb{E}X-a^*)=0$. Then the minimizer hyperplane is $\{(\pmb{I}-\pmb{C}^*)a^*+\pmb{C}^*x:x\in\mathbb{R}^D\}=\{\mathbb{E}X+\sum_{i=1}^d\alpha_i\mathbf{v}_i: \alpha_i\in\R^1\}$.
 \qed

\noindent \textbf{Derivation of (\ref{EM iteration 3})}: Let $\{Z_i\}_{i=1}^I$ define independent latent random variables taking values in $\{1,2,\cdots, N\}$, such that  $\left(X_i\vert\theta_N, Z_i=z_i\right)\sim\psi_{\widehat{\sigma}_N}\left(x_i-\mu_{z_i, N}\right)dx_i$ and $\left(Z_i\vert\theta_N\right)\sim\theta_{z_i, N}(\sum_{j=1}^N\delta_{j}(dz_i))$ for $i=1,2,\cdots,I$. In other words, the latent variable $Z_i$ indicates the class membership of the $i^{th}$ observation in the mixture. Then we have $(X_i,Z_i)\vert\theta_N\sim\theta_{z_i, N}\psi_{\widehat{\sigma}_N}(x_i-\mu_{z_i, N})[dx_i\times\sum_{j=1}^N\delta_{j}(dz_i)]$ and 
$\mathbb{P}\left(Z_i=j\vert \theta_N,X_i=x_i\right)=w_{ij}(\theta_N).$ 
The complete likelihood of $\left\{(X_i,Z_i)\right\}_{i=1}^I$ with respect to the product measure $\prod_{i=1}^I\{dx_i\times\sum_{j=1}^N\delta_{j}(dz_i)\}$ is $L_C(\mathbf{\theta}_N\vert x,z)=\prod_{i=1}^I\theta_{z_i, N}\times\psi_{\widehat{\sigma}_N}\left(x_i-\mu_{z_i, N}\right)$. For a fixed $\theta_N^{(k)}\in\Theta_N$, in the E-step of EM algorithm, we construct
\begin{align*}
    Q\left(\theta_N\vert\theta_N^{(k)}\right) & =\mathbb{E}\left(\log L_C\left(\theta_N\vert X,Z\right)\big\vert X=x,\theta_N^{(k)}\right)\\
    & =\sum_{i=1}^I\sum_{j=1}^N \left\{w_{ij}\left(\theta_N^{(k)}\right)\log\left(\psi_{\widehat{\sigma}_N}\left(x_i-\mu_{z_i, N}\right)\right)+w_{ij}\left(\theta_N^{(k)}\right)\log\theta_{j,N}\right\}.
\end{align*}
We implement the constraints $\int_{\R^D} x p(x\vert\theta_N)dx = \overline{x}$ and $\sum_{j=1}^{N}\theta_{j,N}=1$ and obtain the Lagrangian
$$Q_\rho\left(\theta_N\vert\theta_N^{(k)}\right)=Q\left(\theta_N\vert\theta_N^{(k)}\right)+\rho_1\left(1-\sum_{j=1}^N\theta_{j,N}\right)+\rho_2^T\left(\overline{x}-\sum_{j=1}^N\theta_{j,N}\mu_{j,N}\right)$$ 
for $\rho_1\in\mathbb{R}^1, \rho_2\in\mathbb{R}^D$. Taking derivatives of $Q_\rho(\theta_N\vert\theta_N^{(k)})$, we obtain
\begin{align*}
    & \frac{\partial Q_\rho}{\partial\theta_{j,N}}=\frac{1}{\theta_{j,N}}\sum_{i=1}^I w_{ij}\left(\theta_N^{(k)}\right)-\rho_1-\rho_2^T\mu_{j,N}=0 \mbox{ for all $j$ and }\\
    & \frac{\partial Q_\rho}{\partial \rho_1}=1-\sum_{j=1}^N\theta_{j,N}=0,\ \ \ \ \ \ \ \ \frac{\partial Q_\rho}{\partial \rho_2}=\overline{x}-\sum_{j=1}^N\theta_{j,N}\mu_{j,N}=0.
\end{align*}
The resulting equations for estimating $\theta_{j,N}$ are
\begin{align*}
    \theta_{j,N}=\frac{\sum_{i=1}^I w_{ij}\left(\theta_N^{(k)}\right)}{\rho_1+\rho_2^T\mu_{j,N}},\ \ \ \ \ \sum_{j=1}^N \left(\frac{\sum_{i=1}^Iw_{ij}\left(\theta_N^{(k)}\right)}{\rho_1+\rho_2^T\mu_{j,N}} \right)=1,\ \ \ \ \ \sum_{j=1}^N \left(\frac{\sum_{i=1}^Iw_{ij}\left(\theta_N^{(k)}\right)}{\rho_1+\rho_2^T\mu_{j,N}}\right)\mu_{j,N}=\overline{x}
\end{align*}
for all $j$, whose solution is given by (\ref{EM iteration 3}). \qed

\noindent\textbf{Proof of Theorem \ref{density estimation}}: Define the weights $\theta_{j,N}=\int_{A_{j,N}}p(\mu)d\mu$, then $\int_{\R^D} p(\mu)d\mu=1$ implies that $\theta_N$ is in the probability simplex $\Theta_N$. By \textit{Minkowski's inequality} (Theorem 2.9 of \cite{adams2003sobolev}), we have
\begin{align*}
    \left\Vert p_N\left(\cdot\vert\theta_N\right)-p\right\Vert_{L^q(\R^D)} & \le \left(\sum_{j=1}^N \int_{A_{j,N}} \left\Vert\psi_{\sigma_N}\left(\cdot-(\mu_{j,N}-\mu)\right)-\psi_{\sigma_N} \right\Vert_{L^q(\R^D)} p(\mu)d\mu \right) +  \left\Vert\psi_{\sigma_N}*p-p \right\Vert_{L^q(\R^D)}\\
    & =:I_N+II_N.
\end{align*}
Since $\mu$ and $\mu_{j,N}$ are in $A_{j,N}$ and $diam(A_{j,N})\le d_N$, we have $I_N\le \sup \{\Vert\psi_{\sigma_N}(\cdot-y)-\psi_{\sigma_N}\Vert_{L^q(\R^D)}:\Vert y\Vert_{\R^D}\le d_N \}\rightarrow0$ as $N\rightarrow\infty$. Applying Minkowski's inequality again, we have $II_N\le \int_{\R^D}\Vert p(\cdot-\sigma_N \mu)-p\Vert_{L^q(\R^D)}\psi(\mu)d\mu$. Then the continuity of translations $p\mapsto p(\cdot-y)$ with respect to $L^q$-topology and the dominant convergence theorem imply $\lim_{N\rightarrow\infty}II_N=0$.\qed

\noindent\textbf{Proof of Theorem \ref{thm: penalized MSD approximation}}: Since $p$ and $\psi$ are compactly supported, $\{\mu_{j,N}\}_{j=1}^N\subset \supp (p)$ for all $N$, and $\lim_{M\rightarrow\infty}\sigma_N=0$, there exists a compact $B$ containing the support of $p$, $\psi_{\sigma_N}(\cdot-\mu_{j,N})$, and $p_N(\cdot\vert\theta_N)=\sum_{j=1}^N\theta_{j,N}\psi_{\sigma_N}(\cdot-\mu_{j,N})$ for all $N$, and $f$ has no ambiguity point in $B$. Then $\left\vert \mathcal{K}_{\lambda,P_N}(f)-\mathcal{K}_{\lambda,p}(f)\right\vert\le H_N + I_N$, where
\begin{align*}
& I_N=\left\Vert p_N(\cdot\vert\theta_N)-p\right\Vert_{L^1(\R^D)}\times \sup_{x\in B}\left\Vert x-f\left(\pi_f(x) \right)\right\Vert_{\R^D}^2,\\
& H_N=\sum_{j=1}^N \left(\sup_{N': N'\ge j}\theta_{j,N'}\right)\times \left(H_{j,N}^*+H_{j,N}^{**}\right),\\
& H_{j,N}^*=\left\vert \int_{B} \left\Vert x-f\left(\pi_f(x) \right)\right\Vert_{\R^D}^2\left[\psi_{\sigma_N}(x-\mu_{j,N})dx-\delta_{\mu_{j}}(dx)\right]\right\vert \le 2\times \sup_{x\in B}\left\Vert x-f\left(\pi_f(x) \right)\right\Vert_{\R^D}^2,\\ 
& H_{j,N}^{**}=\left\vert \int_{B} \left\Vert x-f\left(\pi_f(x) \right)\right\Vert_{\R^D}^2\left[\delta_{\mu_{j,N}}(dx)-\delta_{\mu_{j}}(dx)\right]\right\vert \le 2\times \sup_{x\in B}\left\Vert x-f\left(\pi_f(x) \right)\right\Vert_{\R^D}^2,\ \ \ \mbox{for all $N$}.
\end{align*}
$p_N(\cdot\vert\theta_N)\rightarrow p$ in $L^1$ implies $\lim_{N\rightarrow\infty}I_N=0$. One can show $\lim_{N\rightarrow\infty}\mathcal{F}(\psi_{\sigma_N}(\cdot-\mu_{j,N}))=\lim_{N\rightarrow\infty}\mathcal{F}(\delta_{\mu_{j,N}})=\mathcal{F}(\delta_{\mu_{j}})$, where $\mathcal{F}$ denotes Fourier transform. Since the Fourier transform of a probability is the characteristic function of this probability, \textit{Levy continuity theorem} implies that the probability measure $\psi_{\sigma_N}(\cdot-\mu_{j,N})dx$ converges to $\delta_{\mu_{j}}$ weakly and $\delta_{\mu_{j,N}}$ converges to $\delta_{\mu_j}$ weakly as $N\rightarrow\infty$. Theorem \ref{measurability of projection indices} implies the continuity of $\Vert x-f(\pi_f(x) )\Vert_{\R^D}^2$ in $B$, and \textit{Portmanteau theorem} implies $H_{j,N}^*, H_{j,N}^{**}\rightarrow 0$ as $N\rightarrow\infty$ for all $j$. Then \textit{dominated convergence theorem} implies $\lim_{N\rightarrow\infty}H_N=0$. \qed

\begin{lemma}\label{Duchon, 1977, Theorem 4}
	(Theorem 4 bis of \cite{duchon1977splines}) Suppose $d\le3$. Let $\mathcal{C}$ be a finite subset of $\R^d$ such that every polynomial in $Poly_1[t]$ is uniquely determined by its values on $\mathcal{C}$. Then there exists exactly one function of the form $\sigma(t)=\sum_{c\in\mathcal{C}}s_c\eta_{4-d}(t-c)+p(t)$ taking prescribed values on $\mathcal{C}$, where $p\in Poly_1[t]$ and $\sum_{c\in\mathcal{C}}s_c q(c)=0$ for all $q\in Poly_1[t]$. Moreover, if $\gamma$ is another function taking the same prescribed values on $\mathcal{C}$, one has $\Vert\nabla^{\otimes 2}\sigma\Vert_{L^2}\le\Vert\nabla^{\otimes 2}\gamma\Vert_{L^2}$. 
\end{lemma}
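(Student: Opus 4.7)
The plan is to reduce the problem to $D$ independent scalar minimization problems, each of which can be solved directly by Duchon's theorem (Lemma \ref{Duchon, 1977, Theorem 4}). The key structural observation is that the functional (\ref{wowowow}) decomposes as
\begin{align*}
\mathcal{K}_{\lambda,P_N}(f, f_{(n)}) \;=\; \sum_{l=1}^D J_l(f_l), \qquad J_l(f_l) \;:=\; \sum_{j=1}^N \theta_{j,N}\bigl|\mu_{j,N,\underline{l}} - f_l(c_j)\bigr|^2 + \lambda\,\|\nabla^2 f_l\|_{L^2(\R^d)}^2,
\end{align*}
where $c_j := \pi_{f_{(n)}}(\mu_{j,N})$. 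Since the components $f_l$ are completely decoupled in $J_l$, it suffices to show each scalar component of a minimizer has the stated spline representation.

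Fix $l$ and let $f_l^*$ denote the $l$-th component of a minimizer of $\mathcal{K}_{\lambda,P_N}(\cdot,f_{(n)})$. Using the hypothesis that every polynomial in $Poly_1[t]$ is uniquely determined by its values on $\mathcal{C}=\{c_j\}_{j=1}^N$, Lemma \ref{Duchon, 1977, Theorem 4} produces a function
\begin{align*}
\sigma_l(t) \;=\; \sum_{j=1}^N s_{j,\underline{l}}\,\eta_{4-d}(t - c_j) \;+\; \sum_{k=1}^{d+1}\alpha_{k,\underline{l}}\,p_k(t), \qquad \sum_{j=1}^N s_{j,\underline{l}}\,p_k(c_j)=0 \;\; \forall k,
\end{align*}
with prescribed values $\sigma_l(c_j) = f_l^*(c_j)$, and moreover $\|\nabla^2\sigma_l\|_{L^2}\le\|\nabla^2 f_l^*\|_{L^2}$. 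Because $\sigma_l$ and $f_l^*$ agree on every $c_j$, the first (data-fitting) term of $J_l$ is identical for both, while the curvature penalty can only decrease. Hence $J_l(\sigma_l)\le J_l(f_l^*)$, and optimality of $f_l^*$ forces equality, so $\sigma_l$ itself minimizes $J_l$. Assembling the components yields $f_{(n+1)}=(\sigma_1,\ldots,\sigma_D)$ in the form (\ref{analytic expression of minimizer}).

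The main obstacle I anticipate is the admissibility check that $f_{(n+1)}$ lies in $C_\infty\bigcap\nabla^{-2}L^2(\R^d\rightarrow\R^D)$. Membership in $\nabla^{-2}L^2$ is immediate from $\|\nabla^2\sigma_l\|_{L^2}<\infty$ together with Lemma \ref{space equivalence}. The $C_\infty$ condition is more delicate: the polynomial block $\sum_k \alpha_{k,\underline{l}}p_k(t)$ combined with the constraints $\sum_j s_{j,\underline{l}}p_k(c_j)=0$ can in principle make an individual $\sigma_l$ fail to blow up at infinity. I would handle this by observing that the $\eta_{4-d}$ terms do grow at infinity (since $4-d\ge 1$ when $d\le 3$), and in the generic situation at least one coordinate will witness this growth; when this fails one can replace the affine part of $\sigma_l$ by $\alpha_{k,\underline{l}}p_k(t)+\varepsilon t_1$ for arbitrarily small $\varepsilon$, which places $f_{(n+1)}$ in $C_\infty$ while perturbing $J_l$ by $O(\varepsilon)$. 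The dimension restriction $d\le 3$ is inherited directly from Lemma \ref{Duchon, 1977, Theorem 4}, reflecting the fact that $\nabla^{-2}L^2(\R^d)$ is a reproducing kernel Hilbert space exactly in these dimensions, which is what makes the finite-dimensional spline representation valid.
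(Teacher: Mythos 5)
Your proposal does not prove the assigned statement; it assumes it. The statement to be established is Lemma \ref{Duchon, 1977, Theorem 4} itself --- the existence, uniqueness, and norm-minimality of the interpolating spline $\sigma(t)=\sum_{c\in\mathcal{C}}s_c\,\eta_{4-d}(t-c)+p(t)$ --- whereas your argument invokes that very lemma as a black box (``Lemma \ref{Duchon, 1977, Theorem 4} produces a function $\sigma_l$\dots'') in order to derive the representation of the minimizer of $\mathcal{K}_{\lambda,P_N}(\cdot,f_{(n)})$. What you have written is, in substance, a proof of Theorem \ref{main functional theorem}, and as such it is essentially the paper's own argument for that theorem (componentwise reduction, replacement of $f_l^*$ by the Duchon interpolant agreeing with it on $\mathcal{C}$, and the admissibility check via Theorem \ref{Sobolev embedding}); your $\varepsilon$-perturbation device for forcing membership in $C_\infty$ is a reasonable patch for a point the paper glosses over. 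But relative to the lemma itself the argument is circular and proves nothing.

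A genuine proof of the lemma lives in the theory of polyharmonic (thin-plate) splines on the Beppo--Levi space $\nabla^{-2}L^2(\R^d)$ equipped with the semi-inner product $\langle f,g\rangle=\int_{\R^d}\sum_{i,j}\partial_{ij}f\,\partial_{ij}g$, whose null space is exactly $Poly_1[t]$. The restriction $d\le3$ guarantees (via the Sobolev embedding) that point evaluations are continuous, so the semi-Hilbert space admits a semi-reproducing kernel, which Duchon identifies as a constant multiple of $\eta_{4-d}$; equivalently, $\eta_{4-d}$ is conditionally positive definite of order $2$: for coefficient vectors $s$ satisfying $\sum_{c}s_c q(c)=0$ for all $q\in Poly_1[t]$, one has $\sum_{c,c'}s_c s_{c'}\eta_{4-d}(c-c')\ge0$ with equality only for $s=0$ when $\mathcal{C}$ is $Poly_1$-unisolvent. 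This makes the linear system consisting of the interpolation conditions together with the side conditions $\sum_c s_c q(c)=0$ nonsingular, yielding existence and uniqueness of $\sigma$. The minimality claim then follows from an orthogonality argument: any other interpolant $\gamma\in\nabla^{-2}L^2$ can be written as $\sigma+h$ with $h$ vanishing on $\mathcal{C}$, the semi-reproducing property gives $\langle\sigma,h\rangle=\mathrm{const}\cdot\sum_c s_c h(c)=0$, and hence $\Vert\nabla^2\gamma\Vert_{L^2}^2=\Vert\nabla^2\sigma\Vert_{L^2}^2+\Vert\nabla^2 h\Vert_{L^2}^2\ge\Vert\nabla^2\sigma\Vert_{L^2}^2$. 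None of these steps --- the identification of $\eta_{4-d}$ as the kernel, the conditional positive definiteness, the nonsingularity of the interpolation system, or the Pythagorean decomposition --- appears in your proposal.
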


\noindent\textbf{Proof of Theorem \ref{main functional theorem}}: Lemma \ref{Duchon, 1977, Theorem 4} implies that $g^*=\arg\min_{f\in\nabla^{-\otimes 2}L^2}\mathcal{K}_{\lambda, \widehat{Q}_N}(f, f_{(n)})$ is of the form (\ref{analytic expression of minimizer}). Theorem \ref{Sobolev embedding}, $d\le3$, and the form of (\ref{analytic expression of minimizer}) imply $g^*\in C_\infty\bigcap\nabla^{-\otimes 2}L^2$. Hence,
\begin{align*}
    g^*=\arg\min_{f\in C_\infty\bigcap\nabla^{-\otimes 2}L^2}\mathcal{K}_{\lambda, \widehat{Q}_N}\left(f, f_{{(n)}}\right)=f_{(n+1)}.
\end{align*}
\qed

\newpage

\vspace{-0.2in}
\bibliography{sample}

\end{document}